\newtheorem{theorem}{Theorem}
\newtheorem{lemma}{Lemma}
\newtheorem{corollary}{Corollary}
\newtheorem{definition}{Definition}
\newtheorem{observation}{Observation}
\newcommand{\bra}[1]{\mbox{$\left\langle #1 \right|$}}
\newcommand{\ket}[1]{\mbox{$\left| #1 \right\rangle$}}
\newcommand{\braket}[2]{\mbox{$\left\langle #1 | #2 \right\rangle$}}
\def\tr{{\rm Tr}}
\def\IR{{\mathbb R}}
\def\IC{{\mathbb C}}
\def\ChSum{{\mathbb T}}
\newcommand{\eqref}[1]{(\ref{#1})}
\newcommand{\implies}{\Longrightarrow}
\newcommand{\textcolor}[1]{}
\begin{document}

\title
{Disguising quantum channels by mixing and channel distance trade-off}

\author{Chi-Hang Fred Fung and H.~F. Chau}
\address{Department of Physics and Center of Theoretical and Computational Physics, University of Hong Kong, Pokfulam Road, Hong Kong}
\ead{chffung@hku.hk}


\begin{abstract}
We consider the reverse problem to the distinguishability of two quantum channels, which we call the disguising problem.
Given two quantum channels, 
the goal here is to make the two channels identical by mixing with some other channels with minimal mixing probabilities.
This quantifies
how much one channel can disguise as the other.
In addition,
the possibility to trade off between the two mixing probabilities 
allows one channel to be more preserved (less mixed) at the expense of the other.
We derive lower- and upper-bounds of the trade-off curve and apply them to a few example channels.
Optimal trade-off is obtained in one example.
We relate the disguising problem and the distinguishability problem by showing the the former can lower and upper bound the diamond norm.
We also show that the disguising problem gives an upper bound on the key generation rate in quantum cryptography.
\end{abstract}

\pacs{03.67.Hk, 03.67.Lx, 03.67.Dd}

\maketitle

\section{Introduction}

Quantum information processing involves the transformation of quantum states through quantum channels and 
it is often useful to quantify how far apart 
quantum states or quantum channels are.
Depending on the problem at hand, different ways of measuring the distance may be adopted.
Trace distance~\cite{Nielsen2000,Bengtsson2008} and fidelity~\cite{Uhlmann1976273,Alberti1983,Jozsa1994} are two widely-used measures for quantum states.
Trace distance is particularly interesting because it corresponds to a measurement that distinguishes between two quantum states with the minimum error.
Other distances for quantum states have also been studied recently, including the Monge distance~\cite{Zyczkowski:2001:Monge}, the $k$th operator norm~\cite{Johnston:2010}, and the partitioned trace distance~\cite{Rastegin:2010}.
For quantum channels,
measures~\cite{PhysRevA.71.062310} have also been proposed based on extending
the fidelity measure~\cite{Belavkin:2005} and the trace distance measure~\cite{Kitaev:1997} of quantum states.
The diamond norm, in particular, is a trace-distance-based measure for quantum channels.
It was first introduced in quantum information processing by Kitaev~\cite{Kitaev:1997} for studying quantum error correction and
has a nice operational meaning because it corresponds to minimum-error channel discrimination. 
As such,
the diamond norm has been receiving a lot of attention since its introduction, in both the theoretical aspect
\cite{Aharonov:1998:QCM:276698.276708,Rosgen:2005:distinguishing,Sacchi:2005:discrimination,Li2008_diamond,Piani:2009:discrimination,Yu2012:bounds}
and the computational aspect
\cite{Watrous:2009:SDP,Benenti:2010:approximation}.

While distinguishability (of quantum states and channels) is a well studied problem,
we consider the reverse problem -- the disguising problem for quantum channels.
Unlike the distinguishability problem in which the goal is to find a measurement that distinguishes between two (or more) states or channels, the aim in the disguising problem is to find out the minimal mixing needed to make two (or more) quantum channels completely identical.
In essence, this quantifies 
how much the effect of one channel is partially carried out by another channel. 
In this pilot study, we investigate the disguising problem for two channels.

As we show in this paper, 
the disguising problem and the distinguishability problem can be considered as dual to each other.
We establish this 
by showing that the solution of the disguising problem can 
be used to lower and upper bound the diamond norm,
{\textcolor{mycolor2}{
which is a measure of distinguishability.
}}%
This has an interesting implication:
the more distinguishable two quantum channels are, the more effort it takes to disguise one as the other.
Additionally, the disguising problem can be cast as a semidefinite program and we also show efficient ways to compute lower- and upper-bounds of it.
{\textcolor{mycolor2}{
We note that the diamond norm
can be computed using
semidefinite/convex programming and Monte Carlo methods~\cite{Johnston:2009,Watrous:2009:SDP,Ben-Aroya:2010,Benenti:2010:approximation}.
}}%

The disguising problem can be understood with the following operational interpretation.
First note that the operational meaning of the diamond norm is based on the perspective of the receiver who tries to distinguish between two channels.
A reverse perspective is to look at the channel intervener who tries to make the channels identical by minimal intervention.
The channel intervener possesses the two original channels as black boxes.
She is not allowed to open them and 
is only allowed
to occasionally substitute each of them with some other arbitrary channel.
We ask what are the minimal 
mixing probabilities needed to make the two intervened channels identical?

The precise problem statement is the following.
Given two quantum channels $\mathcal{E}(\rho)=\sum_i E_i \rho E_i^\dag$ and $\mathcal{F}(\rho)=\sum_i F_i \rho F_i^\dag$ 
acting on an $n$-dimensional Hilbert space
where $E_i$ and $F_i$ are $n \times n $ complex matrices representing the Kraus operators of the channels,
we consider the processing
\begin{eqnarray}
\mathcal{E}(\rho) \rightarrow \mathcal{E}'(\rho)&=&(1-p) \mathcal{E}(\rho) + p  \mathcal{E}_\Delta (\rho),
\label{eqn-mixed-channel-E}
\\
\mathcal{F}(\rho) \rightarrow \mathcal{F}'(\rho)&=&(1-q) \mathcal{F}(\rho) + q  \mathcal{F}_\Delta (\rho) 
\label{eqn-mixed-channel-F}
\end{eqnarray}
such that 
$\mathcal{E}'=\mathcal{F}'$ with 
{\textcolor{mycolor2}{
``the smallest'' $p$ and $q$ (which will be clarified later as the distance profile).
}}%
In other words,
we are interested in 
the least amount of 
substitution
of $\mathcal{E}$ and $\mathcal{F}$ needed 
to make them equal.
This is illustrated in figure~\ref{fig-make-channels-equiv}.
Operationally, 
the new channel probabilistically selects between the original channel and some other 
{\it harmonizing channel}, 
$\mathcal{E}_\Delta$ or $\mathcal{F}_\Delta$, which is yet to be determined.
The smaller the 
{\it mixing probability}, 
$p$ or $q$,
the closer the new channel is to the original one.
Thus, $p$ and $q$ serve as a distance between the two channels.
We note that in general the 
harmonizing channels 
$\mathcal{E}_\Delta$ and $\mathcal{F}_\Delta$ 
are not universal and
depend on the original channels $\mathcal{E}$ and $\mathcal{F}$.
In fact, the problem becomes trivial if we insist $\mathcal{E}_\Delta$ and $\mathcal{F}_\Delta$ to be universal for then $\mathcal{E}_\Delta=\mathcal{F}_\Delta$ (with either $p=q=1$ when $\mathcal{E}\neq\mathcal{F}$ or $p=q=0$ when $\mathcal{E}=\mathcal{F}$).
Also, note that $\mathcal{E}'=\mathcal{F}'$ is trivially satisfied with $\mathcal{E}_\Delta=\mathcal{F}$, $\mathcal{F}_\Delta=\mathcal{E}$, and $p+q=1$.
This gives a linear trade-off between $p$ and $q$.
However, in general, better sub-linear trade-off can be obtained, as we show later.
Note that $\mathcal{E}_\Delta$ and $\mathcal{F}_\Delta$ can be general quantum channels with arbitrary complexity.

\begin{figure}
\begin{center}
\includegraphics[width=.5\columnwidth]{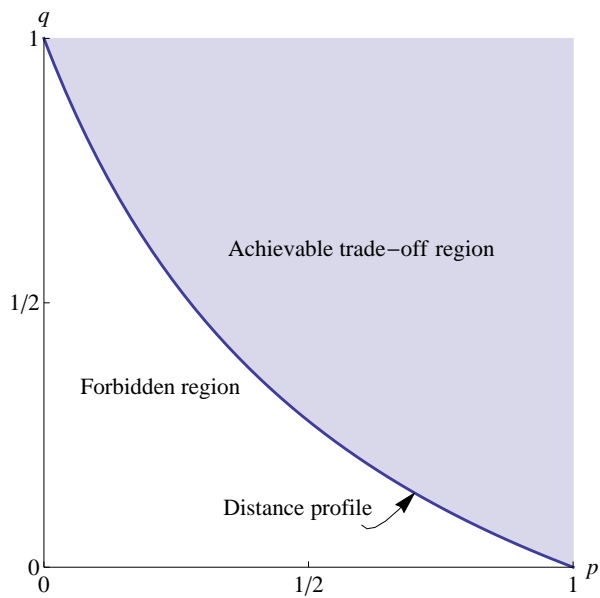}
\caption{
\label{fig:distance-profile}
Achievable trade-off region and distance profile of the channel disguising problem.
When the mixing probabilities $p$ and $q$ are too small (corresponding to the forbidden region), the channels $\mathcal E$ and $\mathcal F$ cannot be made equal.
}
\end{center}
\end{figure}

The pair of parameters $(p,q)$ represents the trade-off between the two channels'
 mixing probabilities.
The more 
mixing
is imposed on one channel, the less 
mixing
is required on the other.
Note that if $(p,q)$ is a trade-off point, $(p+\delta p, q+\delta q)$ with $\delta p, \delta q \geq 0$ is a also a trade-off point (and we call the former point strictly better than the latter).
When given a region of achievable trade-off points, a trade-off curve can be obtained by tracing out the boundary such that no point is strictly better than another.
This gives us a distance profile for the two channels 
{\textcolor{mycolor2}{
(see figure~\ref{fig:distance-profile}).
}}%
Thus, our measure is unique in that it is represented by a 2-dimensional curve rather than a scalar as in other measures for quantum channels.
On the other hand,
a scalar distance may be obtained from our measure 
in several ways,
for example,
(i) by
imposing
equal 
mixing
probabilities 
$p=q$ and regarding the minimum $p=q$ as the distance between the two channels,
or (ii) by regarding the minimum $p+q$ as the distance.
We will justify that these two are distances by showing that the triangle inequality holds.


\begin{figure*}
\begin{center}
\includegraphics{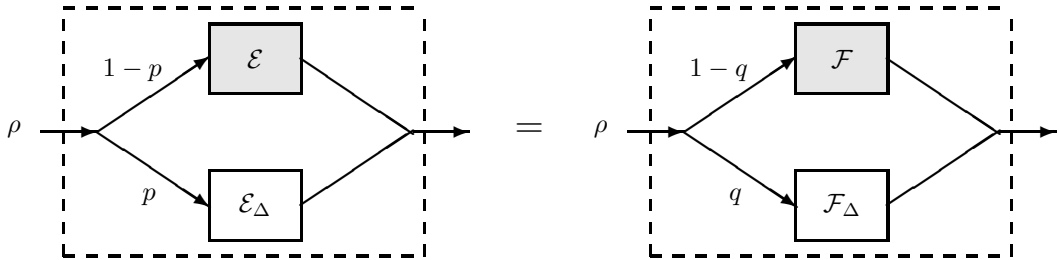}
\caption{
\label{fig-make-channels-equiv}
Two quantum channels, $\mathcal{E}$ and $\mathcal{F}$, are made identical by
mixing the original channel $\mathcal{E}$ ($\mathcal{F}$) with a 
harmonizing channel
$\mathcal{E}_\Delta$ ($\mathcal{F}_\Delta$) with probability $p$ ($q$).
The designer is free to choose the non-shaded parts $\mathcal{E}_\Delta$ and $\mathcal{F}_\Delta$, 
but not $\mathcal{E}$ and $\mathcal{F}$.
}
\end{center}
\end{figure*}


The disguising problem admits a geometric interpretation.
Given a channel $\mathcal{E}$, we denote the set of all channels achieved by mixing channel $\mathcal{E}$ with arbitrary harmonizing channels and mixing probability $p$ as
\begin{eqnarray*}
S_p(\mathcal{E}) 
&\equiv
&
\{
\mathcal{E}':
\mathcal{E}'=(1-p) \mathcal{E} + p  \mathcal{E}_\Delta ,
\nonumber
\text{ for some harmonizing channel } \mathcal{E}_\Delta
\} .
\end{eqnarray*}
Note that $S_{p'} \subset S_p$ for $p' < p$ for the following reason.
For any $\mathcal{E}' \in S_{p'}$, we have
\begin{eqnarray*}
\mathcal{E}'&=&(1-p') \mathcal{E} + p'  \mathcal{E}_\Delta
\\
&=&(1-p) \mathcal{E} + [p'  \mathcal{E}_\Delta + (p-p') \mathcal{E} ]
\end{eqnarray*}
where the term in bracket is a valid quantum channel scaled by $p$.
Thus, $\mathcal{E}'$ 
 can be regarded as having mixing probability $p$ and
so $\mathcal{E}' \in S_{p}$.
It can be easily checked that $S_p(\mathcal{E})$ is compact and convex.
This enables a geometric interpretation of the disguising problem as a search for $p$ and $q$ so that $S_p(\mathcal{E})$ and $S_q(\mathcal{F})$ just meet (see figure~\ref{fig:geometric}).
\begin{figure}
\begin{center}
\includegraphics[width=.5\columnwidth]{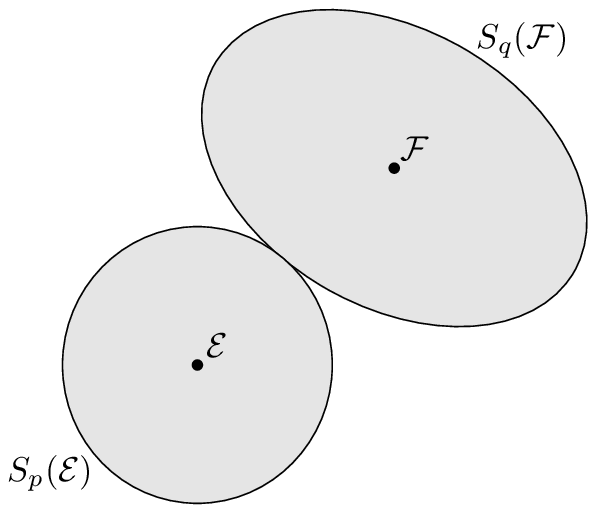}
\centerline{(a)}
\includegraphics[width=.5\columnwidth]{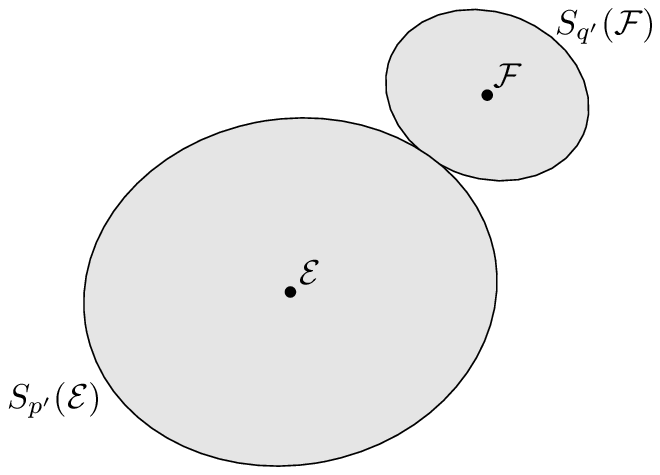}
\centerline{(b)}
\caption{
\label{fig:geometric}
Geometric interpretation of the disguising problem.
The set of channels within mixing probability $p$ ($q$) from channel $\mathcal{E}$ ($\mathcal{F}$) is denoted as $S_p(\mathcal{E})$ ($S_q(\mathcal{F})$).
In (a), we search for $p$ and $q$ so that the two sets meet.
In (b), they meet with different parameters $p'$ and $q'$.
}
\end{center}
\end{figure}

In this paper, we formulate the disguising problem as an optimization problem and 
{\textcolor{mycolor2}{
present 
}}%
our main result in section~\ref{sec-problem-formulation}.
Although solving it turns out to be difficult, we are able to obtain lower-bound and upper-bound on the $(p,q)$ trade-off curve (cf. equation~\eqref{eqn-main-result-final-bounds}).
{\textcolor{mycolor2}{
In section~\ref{sec-proof}, we 
prove the main result, which is the lower- and upper-bounds.
}}%
Next, we illustrate the computation of the bounds in a few examples for different quantum channels in section~\ref{sec-examples}.
In one special case, the analytical lower- and upper-bounds coincide, effectively producing the optimal trade-off curve.
For the other cases, the numerically computed lower- and upper-bounds are quite tight, showing the effectiveness of the bounds.
In section~\ref{sec-relation-distinguishability},
we show that the disguising problem can lower and upper bound the diamond norm which quantifies the distinguishability of two quantum channels.
We also make three remarks about our distance profile for quantum channels in section~\ref{sec-remarks}.
We discuss one application of the disguising problem in section~\ref{sec-application}, which is to bound the key generation rate in quantum cryptography.
We conclude in section~\ref{sec-conclusions}.

\section{Notation}
We denote a matrix $A$ to be positive-semidefinite (PSD) by $A \succeq 0$, the transpose of $A$ by $A^t$, and the conjugate transpose of $A$ by $A^\dag$.
$A$ is PSD if and only if $A$ is Hermitian and its eigenvalues are non-negative.
$\lVert A \rVert$ denotes the spectral norm of $A$ which is the largest singular value of $A$ or the largest eigenvalue of $A$ if $A$ is PSD.
$\lVert A \rVert_1 \triangleq \tr \sqrt{A^\dag A}$ denotes the trace norm of $A$.

$\mathcal{B}(\mathcal{H}_n)$ denotes the set of all bounded linear operators 
in an $n$-dimensional Hilbert space $\mathcal{H}_n$.
$I_p$ denotes the identity operator in a $p$-dimensional Hilbert space.
A linear map $\mathcal{E}:\mathcal{B}(\mathcal{H}_n) \rightarrow \mathcal{B}(\mathcal{H}_n)$ is positive
if $\mathcal{E}(A)$ is PSD for all PSD $A$ in $\mathcal{B}(\mathcal{H}_n)$, and 
$\mathcal{E}$ is completely positive (CP) if $\mathcal{E} \otimes I_p$ is positive for all positive integers $p$.
$\mathcal{E}$ is trace-preserving (TP) if 
$\tr(\mathcal{E}(A))=\tr(A)$ for all $A$ in $\mathcal{B}(\mathcal{H}_n)$.

A linear map 
$\mathcal{E}:\mathcal{B}(\mathcal{H}_n) \rightarrow \mathcal{B}(\mathcal{H}_n)$ 
can be represented by
a Choi matrix of size $n^2 \times n^2$~\cite{Choi1975:CPMap}:
\begin{eqnarray}
C_\mathcal{E}&=&
\begin{bmatrix}
\mathcal{E}(\ket{0}\bra{0}) & \mathcal{E}(\ket{0}\bra{1}) & \dots & \mathcal{E}(\ket{0}\bra{n-1})\\
\mathcal{E}(\ket{1}\bra{0}) & \ddots\\
\vdots\\
\mathcal{E}(\ket{n-1}\bra{0}) &  \dots & & \mathcal{E}(\ket{n-1}\bra{n-1})
\end{bmatrix}
\label{eqn-choi-matrix}
\\
&=&\sum_{i,j=0}^{n-1} \ket{i}\bra{j} \otimes \mathcal{E}(\ket{i}\bra{j}).
\end{eqnarray}
We define a function, which we call the {\it channel sum} function, of the Choi matrix $C_\mathcal{E}$ of a linear map $\mathcal{E}$ as follows:
\begin{eqnarray}
\label{eqn-channel-sum-def}
\ChSum(C_\mathcal{E})
&:=&
\tr_2^t ( C_\mathcal{E} )
\\
&=&
\sum_{i,j=0}^{n-1} \ket{i}\bra{j} \cdot \tr[\mathcal{E}(\ket{j}\bra{i})] ,
\end{eqnarray}
where $\tr_2$ is the partial trace over the second system and $t$ represents transpose.
We remark that 
$\ChSum(C_\mathcal{E})=I$ if and only if $\mathcal{E}$ is trace-preserving (see Lemma~\ref{lemma-TP} in 
\ref{app-preliminary}).

\section{Problem formulation and main result}
\label{sec-problem-formulation}

\subsection{Optimization problem formulation}
\label{sec-problem-formulation-detail}

To solve for the optimal distance profile $(p,q)$ of
equations~\eqref{eqn-mixed-channel-E} and \eqref{eqn-mixed-channel-F}
with the condition that the new channels are identical, i.e.,
$\mathcal{E}'=\mathcal{F}'$,
we formulate the problem as 
{\textcolor{mycolor2}{
(see figure~\ref{fig:distance-profile}):
\begin{eqnarray}
\label{eqn-original-minimization-prob-0}
&\text{minimize } & \: 
q\\
&\text{subject to } & \mathcal{E}'=\mathcal{F}' , \nonumber\\
&&
C_{\mathcal{E}_\Delta} \succeq 0 , \nonumber\\
&&
C_{\mathcal{F}_\Delta} \succeq 0 , \nonumber\\
&&
\ChSum(C_{\mathcal{E}_\Delta})=I , \nonumber\\
&&
\ChSum(C_{\mathcal{F}_\Delta})=I , \nonumber
\end{eqnarray}
where 
the minimization is over
$q$, $\mathcal{E}_\Delta$, and $\mathcal{F}_\Delta$, for some fixed $p$.
}}%
Here, we denote the Choi matrices of $\mathcal{E}_\Delta$ and $\mathcal{F}_\Delta$
by $C_{\mathcal{E}_\Delta}$ and $C_{\mathcal{F}_\Delta}$, respectively (see equation~\eqref{eqn-choi-matrix}), and
$\ChSum$ is the channel sum function
defined in 
equation~\eqref{eqn-channel-sum-def}.
The last four constraints demand that $\mathcal{E}_\Delta$ and $\mathcal{F}_\Delta$ be quantum channels (TPCP maps) (cf. Theorem~\ref{thm-Choi-CP} and 
Lemma~\ref{lemma-TP} in 
\ref{app-preliminary}).
{\textcolor{mycolor2}{
Note that the roles of $p$ and $q$ in the formulation of the above optimization problem may be interchanged (i.e., we may have ``fix $q$ and minimize $p$'' instead).
}}%
We remark that this problem can be cast as a semidefinite program with the use of equation~\eqref{eqn-channels-difference0} and may be solved numerically.
However, in this paper, we are interested in the analytical bounds of this problem and the investigation of the trade-off behavior between $p$ and $q$.

Note that for $p=1$, the solution of $q=0$ is trivially obtained to make 
$\mathcal{E}'=\mathcal{F}'$, since 
we can choose $\mathcal{E}_\Delta=\mathcal{F}$ in
equations~\eqref{eqn-mixed-channel-E} and \eqref{eqn-mixed-channel-F}.
By the same token, $(p,q)=(0,1)$ is feasible.

The distance profile (such as that in 
{\textcolor{mycolor2}{%
figure~\ref{fig:distance-profile}%
}}%
) should be convex.
This is because
given two points $(p,q)$ and $(p',q')$ that 
{\textcolor{mycolor2}{%
satisfy 
}}%
$\mathcal{E}'=\mathcal{F}'$ [see equations~\eqref{eqn-mixed-channel-E} and \eqref{eqn-mixed-channel-F}],
any linear combination of them [i.e., $((1-t)p+t p',(1-t)q + t q')$ for some $0\leq t \leq 1$] also satisfies it.
It follows that any point on the
line $q=1-p$ is a feasible solution.

\begin{figure}
\begin{center}
\includegraphics[width=.5\columnwidth]{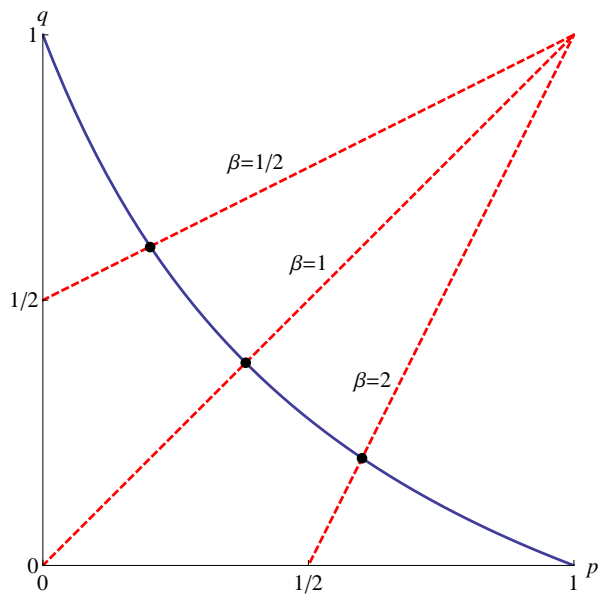}
\caption{
\label{fig:optimization-prob}
Distance profile obtained by 
solving problem~\eqref{eqn-original-minimization-prob}.
Among all the points $(p,q)$ along the line corresponding to a fixed value of $\beta$, we choose the point with the minimum $q$ subject to the constraints of the problem.
For each value of $\beta$, we solve the optimization problem and obtain an optimal point $(p,q)$.
Repeating this for a range of $\beta$ produces the solid curve.
}
\end{center}
\end{figure}

\subsection{Main result}

{\textcolor{mycolor2}{
Our main result is the lower- and upper-bounds of the distance profile generated by 
the solutions of problem~\eqref{eqn-original-minimization-prob-0}.
It turns out that it is easier to analyze and present the main result by expressing the problem as follows:
\begin{eqnarray}
\label{eqn-original-minimization-prob}
&\text{minimize } & \: 
q\\
&\text{subject to } & \mathcal{E}'=\mathcal{F}' , \nonumber\\
&&
\beta=\frac{1-q}{1-p} , \nonumber\\
&&
C_{\mathcal{E}_\Delta} \succeq 0 , \nonumber\\
&&
C_{\mathcal{F}_\Delta} \succeq 0 , \nonumber\\
&&
\ChSum(C_{\mathcal{E}_\Delta})=I , \nonumber\\
&&
\ChSum(C_{\mathcal{F}_\Delta})=I , \nonumber
\end{eqnarray}
where 
the minimization is over
$p$, 
$q$, $\mathcal{E}_\Delta$, and $\mathcal{F}_\Delta$, for some fixed 
parameter $\beta$.  
Here, $\beta$ is a new parameter and
the distance profile $(p,q)$ is obtained by solving this optimization problem over a range of $\beta$
(see figure~\ref{fig:optimization-prob}).
}}

{\textcolor{mycolor2}{
Our main result is that the distance profile $(p,q)$ is bounded from below and above as follows:
\begin{eqnarray}
\label{eqn-main-result-p}
p&=&1-\frac{1}{\alpha+\beta}, 
\text{ and}
\\
\label{eqn-main-result-q}
q&=&\frac{\alpha}{\alpha+\beta} 
\end{eqnarray}
where $\beta$ is fixed and
\begin{eqnarray}
\label{eqn-main-result-final-bounds}
n^{-1}\tr[\ChSum(\Delta_+)]
\leq
\alpha
\leq
\min(
\lVert \ChSum(\Delta_+)\rVert ,
1) .
\end{eqnarray}
Here, $\Delta_+$ is a PSD matrix obtained by
decomposing $C_\mathcal{E}-\beta C_\mathcal{F}$
into the positive and negative subspaces by eigen-decomposition:
\begin{equation}
C_\mathcal{E}-\beta C_\mathcal{F}=\Delta_+ - \Delta_-,
\end{equation}
where $\Delta_\pm$ are 
PSD matrices with support on orthogonal vector spaces
(i.e., $\Delta_+ \Delta_-=0$) and
$C_{\mathcal{E}}$ and $C_{\mathcal{F}}$
are the Choi matrices for the quantum channels $\mathcal{E}$ and $\mathcal{F}$, respectively.
Also, $n$ is the dimension of the quantum states on which the channels act.
}}

Note that $dp/d\alpha >0$ and $dq/d\alpha >0$ implying that a smaller $\alpha$ gives rise to a ``smaller'' pair $(p,q)$ in the 2-dimensional space.
This means that the lower (upper) bounds of $\alpha$ in
equation~\eqref{eqn-main-result-final-bounds} obtained by varying $\beta$ correspond to a lower (upper) bound curve in the $(p,q)$ space.

Note that if the lower bound and upper bound of $\alpha$ coincide, the optimal $\alpha$ and thus optimal $(p,q)$ are obtained.
This happens if and only if 
$\Delta_+$ already corresponds to a scaled quantum channel, i.e., 
$\ChSum(\Delta_+)=\alpha I$, which is not the case in general.
Also, note that 
equation~\eqref{eqn-main-result-final-bounds} implies that
the lower bound of $\alpha$ is always less than or equal to $1$, and thus if 
$\ChSum(\Delta_+)=\alpha I$, $\alpha \leq 1$.

\section{Proof of lower- and upper-bounds}
\label{sec-proof}

As noted earlier, the case of $p=0$ is trivial and thus we focus on the case $p<1$ in the following.
We now
analyze 
problem~\eqref{eqn-original-minimization-prob},
and as we will show later,
directly solving this problem turns out to be difficult.
Let us first focus on the condition that we want: $\mathcal{E}'=\mathcal{F}'$. 
By Theorem~\ref{thm-equiv-channels} in 
\ref{app-preliminary}, we convert this condition to the Choi-matrix equivalence 
$C_{\mathcal{E}'}=C_{\mathcal{F}'}$, which implies that
\begin{eqnarray}
(1-p)C_\mathcal{E}+p C_{\mathcal{E}_\Delta}
&=
(1-q)C_\mathcal{F}+q C_{\mathcal{F}_\Delta}
\label{eqn-channels-difference0}
\\
C_\mathcal{E}-\beta C_\mathcal{F}&=
\frac{q}{1-p} C_{\mathcal{F}_\Delta}-
\frac{p}{1-p}C_{\mathcal{E}_\Delta} 
\label{eqn-channels-difference1}
\end{eqnarray}
where $\beta=\frac{1-q}{1-p}$.
We decompose the left-hand side into the positive and negative subspaces by eigen-decomposition:
\begin{equation}
\label{eqn-channel-decomposition1}
C_\mathcal{E}-\beta C_\mathcal{F}=\Delta_+ - \Delta_-,
\end{equation}
where $\Delta_\pm$ are 
positive semidefinite matrices with support on orthogonal vector spaces
(i.e., $\Delta_+ \Delta_-=0$).
As such, by 
Theorem~\ref{thm-Choi-CP} in 
\ref{app-preliminary},
$\Delta_\pm$ correspond to some CP maps.

Note that 
$G_\mathcal{E}$, $G_\mathcal{F}$, $G_{\mathcal{F}_\Delta}$, $G_{\mathcal{E}_\Delta}$, $\Delta_+$, and $\Delta_-$ are all Choi matrices.

Comparing equations~\eqref{eqn-channels-difference1} and \eqref{eqn-channel-decomposition1},
since the positive and negative parts on the right-hand sides must match,
the Choi matrices of the 
harmonizing channels
must be of the form
\begin{eqnarray}
\label{eqn-delta-channels1-F}
\frac{q}{1-p} C_{\mathcal{F}_\Delta} &=& 
\Delta_+ + X ,
\\
\label{eqn-delta-channels1-E}
\frac{p}{1-p} C_{\mathcal{E}_\Delta} &=& 
\Delta_- + X
\end{eqnarray}
where $X$ is some Hermitian matrix corresponding to the Choi matrix of some linear map.
Note that $\Delta_\pm$
may not correspond to scaled quantum channels because 
$\ChSum(\Delta_\pm) \neq \alpha I$
for 
{\textcolor{mycolor2}{
any 
}}%
$\alpha >0$.
The purpose of adding $X$ is to make them scaled quantum channels
so that $\ChSum(\Delta_\pm + X) = \alpha_\pm I$.
\begin{lemma}
\label{lemma-SDp-SDm}
$\ChSum(\Delta_+)=\ChSum(\Delta_-)+(1-\beta)I
$.
\end{lemma}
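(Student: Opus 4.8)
The plan is to exploit the linearity of the channel-sum function $\ChSum$ together with the trace-preserving property of the two original channels, so that the claimed identity is essentially read off from the decomposition~\eqref{eqn-channel-decomposition1}. First I would note that $\ChSum(\cdot)=\tr_2^t(\cdot)$ is a linear map on Hermitian matrices, since both the partial trace over the second system and the transpose are linear operations. This is the only structural fact about $\ChSum$ that the argument needs.

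Next I would apply $\ChSum$ to both sides of the eigen-decomposition $C_\mathcal{E}-\beta C_\mathcal{F}=\Delta_+-\Delta_-$ from equation~\eqref{eqn-channel-decomposition1}. By linearity this yields
\begin{equation}
\ChSum(C_\mathcal{E})-\beta\,\ChSum(C_\mathcal{F})=\ChSum(\Delta_+)-\ChSum(\Delta_-).
\end{equation}
Because $\mathcal{E}$ and $\mathcal{F}$ are quantum channels and hence trace-preserving, the remark following~\eqref{eqn-channel-sum-def} (established in Lemma~\ref{lemma-TP}) gives $\ChSum(C_\mathcal{E})=\ChSum(C_\mathcal{F})=I$. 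Substituting these into the left-hand side collapses it to $(1-\beta)I$, and rearranging immediately delivers the assertion $\ChSum(\Delta_+)=\ChSum(\Delta_-)+(1-\beta)I$.

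There is no genuine obstacle here: the statement is a one-line consequence of linearity and the trace-preservation hypothesis. If anything, the only point worth checking carefully is that $\ChSum$ may be distributed across the linear combination, i.e.\ that $\Delta_+-\Delta_-$ can be fed through $\tr_2^t$ term by term, which is immediate from the definition in equation~\eqref{eqn-channel-sum-def}. I would therefore keep the proof to these two short steps rather than manipulating the matrix entries directly.
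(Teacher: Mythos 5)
Your proof is correct and follows essentially the same route as the paper: apply the linearity of $\ChSum$ to the decomposition $C_\mathcal{E}-\beta C_\mathcal{F}=\Delta_+-\Delta_-$ and use $\ChSum(C_\mathcal{E})=\ChSum(C_\mathcal{F})=I$ from trace-preservation (the paper merely rearranges to $C_\mathcal{E}+\Delta_-=\beta C_\mathcal{F}+\Delta_+$ before applying $\ChSum$, a cosmetic difference). Your citation of trace-preservation of $\mathcal{E}$ and $\mathcal{F}$ is in fact cleaner than the paper's own wording, which attributes the step to $C_{\mathcal{E}_\Delta}$ and $C_{\mathcal{F}_\Delta}$.
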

\begin{proof}
Rearranging equation~\eqref{eqn-channel-decomposition1} and applying Corollary~\ref{cor-S-linear}, we have
\begin{eqnarray*}
&&
\ChSum(C_\mathcal{E}+\Delta_-)=
\ChSum(\beta C_\mathcal{F}+\Delta_+)\\
&\implies &
\ChSum(C_\mathcal{E})+\ChSum(\Delta_-)=
\ChSum(\beta C_\mathcal{F})+\ChSum(\Delta_+)\\
&\implies &
(1-\beta)I+
\ChSum(\Delta_-)=
\ChSum(\Delta_+)
\end{eqnarray*}
where we have used 
the fact that
$\ChSum(C_{\mathcal{F}_\Delta})=\ChSum(C_{\mathcal{E}_\Delta})=I$
since $C_{\mathcal{F}_\Delta}$ and $C_{\mathcal{E}_\Delta}$ are Choi matrices of quantum channels.
\end{proof}
As a consequence, $\ChSum(\Delta_+ + X) = \alpha I$ if and only if
$\ChSum(\Delta_- + X) = (\alpha+\beta-1) I$.
Furthermore, from equation~\eqref{eqn-delta-channels1-F},
since $\ChSum(C_{\mathcal{F}_\Delta})=I$, we have
\begin{equation}
\label{eqn-relation-p-alpha}
\frac{q}{1-p} = \alpha.
\end{equation}
The same expression is obtained when we consider equation~\eqref{eqn-delta-channels1-E} with $\ChSum(C_{\mathcal{E}_\Delta})=I$.
Thus, minimizing $q$ given $\beta$ fixed is equivalent to minimizing $\alpha$ given $\beta$ fixed, since 
\begin{equation}
\label{eqn-relation-alpha-beta}
\alpha=\frac{q}{1-q} \beta
\end{equation}
is an increasing function of $q$.
The original problem~\eqref{eqn-original-minimization-prob} becomes
\begin{eqnarray}
\hat{\alpha}=&\text{minimize } &\alpha \label{eqn-problem-min-alpha}\\
&\text{subject to }&
\Delta_+ + X \succeq 0 , \nonumber \\
&&\Delta_- + X \succeq 0 , \nonumber \\
&&\ChSum(\Delta_+ + X) = \alpha I , \nonumber \\
&&\ChSum(\Delta_- + X) = (\alpha+\beta-1) I , \nonumber
\end{eqnarray}
where the minimization is over Hermitian matrix $X$ given $\beta$ fixed, and $\Delta_\pm$ are from equation~\eqref{eqn-channel-decomposition1}.
Note that the fourth constraint is redundant due to Lemma~\ref{lemma-SDp-SDm} and is shown only for completeness.
Once $\alpha$ is found, we can compute $p$ and $q$ from 
equations~\eqref{eqn-relation-p-alpha} and \eqref{eqn-relation-alpha-beta}.

We investigate the form of $X$.
Since $C_{\mathcal{F}_\Delta}$ and $C_{\mathcal{E}_\Delta}$ represent quantum channels, they are PSD.
This means that, according to equations~\eqref{eqn-delta-channels1-F} and \eqref{eqn-delta-channels1-E}, $\Delta_\pm + X$ are PSD.
However, this does not mean that $X$ is also PSD,
and this makes finding the optimal $X$ difficult.
Nevertheless, we have the following constraint on $X$ which helps us bound $\hat{\alpha}$.

\begin{lemma}
\label{lemma-non-neg-trace}
{\rm
The constraints of problem~\eqref{eqn-problem-min-alpha} implies
$\tr (X) \geq 0$.
}
\end{lemma}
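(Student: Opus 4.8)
The plan is to derive $\tr(X)\ge 0$ from the two positivity constraints $\Delta_++X\succeq 0$ and $\Delta_-+X\succeq 0$ of problem~\eqref{eqn-problem-min-alpha}, together with the orthogonal-support structure of the eigen-decomposition~\eqref{eqn-channel-decomposition1}, namely $\Delta_+\Delta_-=0$. The obstacle, already flagged in the text, is that $X$ on its own need not be PSD, so $\tr(X)\ge 0$ cannot be read off directly from either constraint. The key observation is that on each of the two orthogonal pieces of the Hilbert space picked out by $\Delta_+$ and $\Delta_-$, exactly one of the two $\Delta$'s vanishes; hence the compression of $X$ to that piece inherits positivity from the corresponding constraint.

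Concretely, first I would let $P$ be the orthogonal projector onto the range of $\Delta_-$. Because $\Delta_+$ and $\Delta_-$ are PSD with orthogonal supports, the range of $\Delta_-$ sits inside $\ker\Delta_+$, giving $\Delta_+P=0$ and hence $P\Delta_+P=0$; dually, $\Delta_-$ annihilates the range of $I-P$, so $(I-P)\Delta_-(I-P)=0$. Next I would compress the two constraints onto these complementary subspaces: compressing $\Delta_++X\succeq 0$ by $P$ yields $PXP=P(\Delta_++X)P\succeq 0$, whence $\tr(PX)=\tr(PXP)\ge 0$; compressing $\Delta_-+X\succeq 0$ by $I-P$ yields $(I-P)X(I-P)=(I-P)(\Delta_-+X)(I-P)\succeq 0$, whence $\tr[(I-P)X]\ge 0$. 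Adding the two inequalities and using $P+(I-P)=I$ gives $\tr(X)=\tr(PX)+\tr[(I-P)X]\ge 0$, which is the claim.

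The only delicate step is the first one, where the identities $P\Delta_+P=0$ and $(I-P)\Delta_-(I-P)=0$ must be justified from $\Delta_+\Delta_-=0$; this is exactly the point at which the orthogonality of the positive and negative subspaces in~\eqref{eqn-channel-decomposition1} is used, and everything after it is a routine compression argument plus the elementary fact that a PSD operator has nonnegative trace. I would also note that the two $\ChSum$ equality constraints of problem~\eqref{eqn-problem-min-alpha} play no role in this lemma; they enter only afterwards, when $\tr(X)\ge 0$ is combined with the trace property of $\ChSum$ to yield the lower bound $\alpha\ge n^{-1}\tr[\ChSum(\Delta_+)]$ in~\eqref{eqn-main-result-final-bounds}.
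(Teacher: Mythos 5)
Your proof is correct and takes essentially the same approach as the paper's: compress the positivity constraints $\Delta_\pm + X \succeq 0$ onto subspaces on which the corresponding $\Delta$ vanishes, and sum the resulting trace inequalities. The only cosmetic difference is that you use the binary resolution $P,\ I-P$ with $P$ the projector onto the range of $\Delta_-$, whereas the paper splits the identity into three projectors $P_+$, $P_-$, $P_0$; these are interchangeable since your $I-P$ equals $P_+ + P_0$ and $\Delta_-$ annihilates the range of both.
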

\begin{proof}
Since $\Delta_+$ and $\Delta_-$ are the positive and negative 
{\textcolor{mycolor2}{
ranges
}}%
of the matrix 
in equation~\eqref{eqn-channel-decomposition1}, we can identify non-overlapping projectors $P_+$ and $P_-$ onto them respectively.
We also define the projector onto the remaining subspace $P_0 = I-P_+ - P_-$.
Since $\Delta_\pm + X$ is PSD, we have
\begin{eqnarray*}
\tr [ P_- (\Delta_+ + X) ] &\geq& 0 , \\
\tr [ P_+ (\Delta_- + X) ] &\geq& 0 ,\text{ and}\\
\tr [ P_0 (\Delta_\pm + X) ] &\geq& 0 ,
\end{eqnarray*}
which implies that
\begin{eqnarray*}
\tr ( P_-  X ) &\geq& 0 ,\\
\tr ( P_+  X ) &\geq& 0 , \text{ and}\\
\tr ( P_0  X ) &\geq& 0.
\end{eqnarray*}
Summing these terms gives the desired result.
\end{proof}
This lemma implies that the non-zero eigenvalues of $X$ cannot be all negative, but $X$ can have positive and negative eigenvalues.

\begin{theorem}
{\rm
The optimal value of problem~\eqref{eqn-problem-min-alpha} is upper bounded by 
$\lVert \ChSum(\Delta_+)\rVert$.
}
\end{theorem}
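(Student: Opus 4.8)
The plan is to prove this upper bound by a direct feasibility construction. Since problem~\eqref{eqn-problem-min-alpha} is a minimization, it suffices to exhibit a single Hermitian matrix $X$ that satisfies all the constraints with $\alpha = \lVert \ChSum(\Delta_+)\rVert$; this immediately forces $\hat{\alpha} \leq \lVert \ChSum(\Delta_+)\rVert$. Thus no optimization over $X$ is needed, only a well-chosen explicit witness.

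First I would record that $M := \ChSum(\Delta_+)$ is positive semidefinite. Indeed $\Delta_+ \succeq 0$, so the partial trace $\tr_2(\Delta_+)$ is PSD, and since the transpose (equivalently, complex conjugate) of a PSD matrix is again PSD with the same spectrum, $M = \tr_2^t(\Delta_+) \succeq 0$. Hence $\lVert M \rVert$ is its largest eigenvalue, and setting $\alpha = \lVert M \rVert$ guarantees that the deficiency $\alpha I - M \succeq 0$. This choice of $\alpha$ exactly at the spectral norm is the crux of the argument: it is precisely what makes the deficiency positive and lets the witness $X$ be taken PSD.

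Next I would realize this deficiency as a channel sum while keeping everything positive. Since $\ChSum(X) = (\tr_2 X)^t$, choosing $X = (\alpha I - M)^t \otimes \sigma$ for any density matrix $\sigma$ gives $\tr_2(X) = (\alpha I - M)^t$ and hence $\ChSum(X) = \alpha I - M$; moreover $X \succeq 0$ because both $(\alpha I - M)^t$ and $\sigma$ are PSD. By linearity of $\ChSum$ (Corollary~\ref{cor-S-linear}) this yields $\ChSum(\Delta_+ + X) = M + (\alpha I - M) = \alpha I$, which is the third constraint. The two positivity constraints $\Delta_\pm + X \succeq 0$ hold automatically as sums of PSD matrices, and the fourth constraint is redundant by Lemma~\ref{lemma-SDp-SDm}. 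Therefore $X$ is feasible at $\alpha = \lVert \ChSum(\Delta_+)\rVert$, establishing the bound.

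I do not anticipate a genuine obstacle here: the subtlety is conceptual rather than computational. The tensoring with an arbitrary $\sigma$ is merely a convenient device to prescribe the partial trace while preserving positivity, and the only real care is recognizing that an upper bound on a minimum requires feasibility, not optimality, so a single explicit point suffices.
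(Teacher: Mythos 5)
Your proof is correct and follows essentially the same route as the paper: both arguments exhibit an explicit PSD witness $X$ whose channel sum equals the deficiency $\lVert \ChSum(\Delta_+)\rVert I - \ChSum(\Delta_+)$, so that $\ChSum(\Delta_+ + X)=\lVert \ChSum(\Delta_+)\rVert I$ by linearity and the positivity constraints $\Delta_\pm + X \succeq 0$ hold automatically. The only difference is cosmetic --- the paper realizes the witness as the rank-one Choi matrix $X=\ket{D_0}\bra{D_0}$ with $D_0^\dag D_0$ equal to the deficiency, whereas you take the tensor-product form $\left(\lVert \ChSum(\Delta_+)\rVert I - \ChSum(\Delta_+)\right)^t \otimes \sigma$; both choices are valid.
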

\begin{proof}
To show an upper bound, we only need to find a feasible $X$.
Choose $M=\lVert \ChSum(\Delta_+)\rVert I - \ChSum(\Delta_+)$ as the difference between two channel sums.
Certainly, $M$ is PSD and thus 
can be written as $M=D_0^\dag D_0$ where $D_0$ is a square matrix.
$M$ represents the channel sum of the channel $\rho \rightarrow D_0 \rho D_0^\dag$.
Let $X=\ket{D_0}\bra{D_0}$ be the Choi representation of this channel where $\ket{D_0}$ is the vector form of $D_0$ (cf. equation~\eqref{eqn-vector-form}).
Since $X$ is PSD, $\Delta_\pm+X$ is PSD and the first two constraints of problem~\eqref{eqn-problem-min-alpha} are satisfied.
Note that $\ChSum(X)=M$ by construction (cf. Lemma~\ref{lemma-channel-sum-of-Choi-equiv} and Definition~\ref{def-channel-sum}).
Therefore, $\ChSum(\Delta_+ + X)=\ChSum(\Delta_+) + \ChSum(X)=\lVert \ChSum(\Delta_+)\rVert I$ by Corollary~\ref{cor-S-linear}.

Note that for this upper bound, we have chosen $X$ to be PSD.
\end{proof}

We computed $\lVert \ChSum(\Delta_+)\rVert$ for random quantum channels and found cases with $\lVert \ChSum(\Delta_+)\rVert > 1$.
Nevertheless, $\hat{\alpha} \leq 1$ is also a valid bound.

\begin{lemma}
{\rm
The optimal value of problem~\eqref{eqn-problem-min-alpha} is upper bounded by unity,
i.e., $\hat{\alpha} \leq 1$.
}
\end{lemma}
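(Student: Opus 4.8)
The plan is to exploit the fact that the optimal value of a minimization is at most the objective value at \emph{any} feasible point, so it suffices to exhibit a single feasible Hermitian $X$ for which $\alpha=1$. The natural candidate is dictated by the trivial linear trade-off noted earlier (the choice $\mathcal{E}_\Delta=\mathcal{F}$, $\mathcal{F}_\Delta=\mathcal{E}$, $p+q=1$): translated to the level of Choi matrices through equations~\eqref{eqn-delta-channels1-F} and \eqref{eqn-delta-channels1-E}, this corresponds to setting $X=C_\mathcal{E}-\Delta_+$. I would then simply verify that this $X$ satisfies all four constraints of problem~\eqref{eqn-problem-min-alpha} at $\alpha=1$.

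First I would substitute $X=C_\mathcal{E}-\Delta_+$ into the two semidefinite constraints. For the first, $\Delta_+ + X = C_\mathcal{E}$, which is PSD because $C_\mathcal{E}$ is the Choi matrix of a genuine quantum channel. For the second, I would use the eigen-decomposition~\eqref{eqn-channel-decomposition1}, namely $C_\mathcal{E}-\beta C_\mathcal{F}=\Delta_+-\Delta_-$, to write $\Delta_-=\Delta_+ - C_\mathcal{E}+\beta C_\mathcal{F}$, whence $\Delta_- + X = \beta C_\mathcal{F}$. Since $\beta=(1-q)/(1-p)\ge 0$ for $0\le q\le 1$ and $p<1$, and $C_\mathcal{F}$ is PSD, this too is PSD. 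Thus both PSD constraints hold.

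Next I would check the channel-sum constraints. By the trace-preserving characterization $\ChSum(C_\mathcal{E})=I$, we get $\ChSum(\Delta_+ + X)=\ChSum(C_\mathcal{E})=I$, which forces $\alpha=1$. Using linearity of $\ChSum$ (Corollary~\ref{cor-S-linear}) together with $\ChSum(C_\mathcal{F})=I$, we get $\ChSum(\Delta_- + X)=\beta\,\ChSum(C_\mathcal{F})=\beta I$, which matches the required $(\alpha+\beta-1)I=\beta I$ at $\alpha=1$. Hence $X=C_\mathcal{E}-\Delta_+$ is feasible with $\alpha=1$, and therefore $\hat{\alpha}\le 1$.

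I do not expect a genuine obstacle here; the only subtlety is guessing the correct $X$, which is pinned down by the trivial solution, and confirming $\beta\ge 0$ so that $\Delta_- + X=\beta C_\mathcal{F}$ stays PSD. I would close by remarking that this feasible point need not be optimal: it merely certifies the universal ceiling $\hat{\alpha}\le 1$, which complements the earlier spectral-norm bound $\lVert \ChSum(\Delta_+)\rVert$ that can exceed unity, so that the two together justify the $\min(\lVert \ChSum(\Delta_+)\rVert,1)$ appearing in equation~\eqref{eqn-main-result-final-bounds}.
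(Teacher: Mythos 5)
Your proof is correct and follows essentially the same route as the paper: both take $X=C_\mathcal{E}-\Delta_+$ (equivalently, the trivial choice $\mathcal{E}_\Delta=\mathcal{F}$, $\mathcal{F}_\Delta=\mathcal{E}$ with $q=1-p$), observe that $\Delta_+ + X=C_\mathcal{E}\succeq 0$ and $\Delta_- + X=\beta C_\mathcal{F}\succeq 0$, and conclude $\hat{\alpha}\leq 1$. Your explicit verification of the channel-sum constraints is a welcome bit of extra care, but there is no substantive difference in approach.
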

\begin{proof}
Set the 
harmonizing channels
in 
equations~\eqref{eqn-mixed-channel-E}--\eqref{eqn-mixed-channel-F}
to be
{\textcolor{mycolor2}{
$\mathcal{E}_\Delta =\mathcal{F}$ and
$\mathcal{F}_\Delta =\mathcal{E}$.
Then, $\mathcal{E}' =\mathcal{F}'$ 
}}%
is satisfied with 
$q=1-p$, which means that $\alpha=1$ according to equation~\eqref{eqn-relation-p-alpha}.
Based on equation~\eqref{eqn-channel-decomposition1}, 
we set 
$X=C_\mathcal{E}-\Delta_+=\beta C_\mathcal{F}-\Delta_-$.
Then, we have 
$\Delta_+ + X=C_\mathcal{E} \succeq 0$ and
$\Delta_- + X=\beta C_\mathcal{F} \succeq 0$.
As such, the constraints of problem~\eqref{eqn-problem-min-alpha} are satisfied with $\alpha=1$.
\end{proof}

We remark that in the above proofs of the two upper bounds, we have explicitly constructed $X$.
Therefore, problem~\eqref{eqn-problem-min-alpha} is always feasible.

\begin{theorem}
{\rm
The optimal value of problem~\eqref{eqn-problem-min-alpha} is lower bounded by 
$n^{-1} \tr [\ChSum(\Delta_+)]$.
}
\end{theorem}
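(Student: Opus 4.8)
The plan is to take the trace of the matrix constraint $\ChSum(\Delta_+ + X) = \alpha I$ and thereby reduce the optimization to a single scalar inequality. The crucial observation is that the channel sum function is trace-preserving: for any Choi matrix $C$ of size $n^2 \times n^2$, we have $\tr[\ChSum(C)] = \tr[C]$. This follows directly from the definition $\ChSum(C) = \tr_2^t(C)$, since transposition leaves the trace invariant and the trace of a partial trace equals the full trace, so that $\tr[\tr_2^t(C)] = \tr[\tr_2(C)] = \tr[C]$.

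First I would apply $\tr[\cdot]$ to both sides of the third constraint of problem~\eqref{eqn-problem-min-alpha}. The right-hand side gives $\tr[\alpha I] = \alpha n$. For the left-hand side, I would invoke linearity of the channel sum (Corollary~\ref{cor-S-linear}) together with the trace-preserving identity above to write
\begin{equation*}
\tr[\ChSum(\Delta_+ + X)] = \tr[\ChSum(\Delta_+)] + \tr[\ChSum(X)] = \tr[\ChSum(\Delta_+)] + \tr(X).
\end{equation*}

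Equating the two expressions yields $\alpha n = \tr[\ChSum(\Delta_+)] + \tr(X)$. Now I would invoke Lemma~\ref{lemma-non-neg-trace}, which guarantees $\tr(X) \geq 0$ for any $X$ feasible in problem~\eqref{eqn-problem-min-alpha}. Dropping this nonnegative term gives $\alpha n \geq \tr[\ChSum(\Delta_+)]$, hence $\alpha \geq n^{-1}\tr[\ChSum(\Delta_+)]$. Since this holds for every feasible $X$, it holds in particular at the minimizer, establishing $\hat{\alpha} \geq n^{-1}\tr[\ChSum(\Delta_+)]$.

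There is no serious obstacle once the trace-preserving property of $\ChSum$ is recognized; the only point requiring care is verifying that identity cleanly from the definition and confirming that the ingredient $\tr(X)\geq 0$, already supplied by the preceding lemma, is exactly what converts the matrix equality into the desired scalar bound. The argument is notably dual to the upper-bound proofs: whereas those exhibited an explicit feasible $X$, here I instead extract a scalar invariant, the trace, that every feasible $X$ must respect.
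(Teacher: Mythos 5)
Your proposal is correct and follows essentially the same route as the paper: take the trace of the constraint $\ChSum(\Delta_+ + X)=\alpha I$, use linearity of $\ChSum$ (Corollary~\ref{cor-S-linear}) and the trace identity $\tr[\ChSum(X)]=\tr(X)$ (Corollary~\ref{cor-same-trace}), and drop the nonnegative term $\tr(X)\geq 0$ supplied by Lemma~\ref{lemma-non-neg-trace}. The only cosmetic difference is that you re-derive the trace-preservation of $\ChSum$ from its definition rather than merely citing the corollary.
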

\begin{proof}
The channel sum is $\ChSum(\Delta_+ + X)$, and
{\textcolor{mycolor2}{
the sum of the eigenvalues of the channel sum $\ChSum(\Delta_+ + X)$ is 
}}%
\begin{eqnarray*}
\tr[\ChSum(\Delta_+ + X)]&=&
\tr[\ChSum(\Delta_+)] + \tr[\ChSum(X)]
\\
&\geq&
\tr[\ChSum(\Delta_+)] ,
\end{eqnarray*}
where
the first line is due to
linearity of $\ChSum$ (cf. Corollary~\ref{cor-S-linear}) and
the second line is
due to Corollary~\ref{cor-same-trace} and Lemma~\ref{lemma-non-neg-trace} which imply $\tr[\ChSum(X)]=\tr (X) \geq 0$.
Finally, since $\ChSum(\Delta_+ + X)=\alpha I$ (cf. problem~\eqref{eqn-problem-min-alpha}), we have $\alpha \geq n^{-1}\tr[\ChSum(\Delta_+)]$.

\end{proof}

In summary, the solution of problem~\eqref{eqn-problem-min-alpha} is bounded as follows:
\begin{equation}
\label{eqn-final-bounds}
n^{-1}\tr[\ChSum(\Delta_+)]
\leq
\hat{\alpha}
\leq
\min(
\lVert \ChSum(\Delta_+)\rVert ,
1).
\end{equation}

If $\Delta_+$ already corresponds to a scaled quantum channel, i.e., 
$\ChSum(\Delta_+)=\alpha I$ for some $\alpha$, 
then 
the optimal 
solution can be found:
$\hat{\alpha}=n^{-1}\tr[\ChSum(\Delta_+)]=\lVert \ChSum(\Delta_+)\rVert=\alpha$. 
In this case, $C_{\mathcal{F}_\Delta}=\alpha^{-1} \Delta_+$ 
and
$C_{\mathcal{E}_\Delta}=(\alpha+\beta-1)^{-1} \Delta_-$
can be found from equations~\eqref{eqn-delta-channels1-F} and~\eqref{eqn-delta-channels1-E} with $X=0$ and equation~\eqref{eqn-relation-p-alpha}.

\subsection{Procedure for computing the lower- and upper-bound $(p,q)$ curves}
\label{sec-procedure}
Suppose that we are given two quantum channels $\mathcal{E}$ and $\mathcal{F}$ of dimension $n$.
\begin{enumerate}
\item
Compute the Choi matrices $C_\mathcal{E}$ and $C_\mathcal{F}$ for the two channels using equation~\eqref{eqn-choi-matrix}.
\item
Fix $\beta$ in the range of $(0,\infty)$.
[Note that $\beta=0$ or $\beta=\infty$ corresponds to $q=1$ or $p=1$ respectively, and these are trivial cases because either $\mathcal{E}'$ or $\mathcal{F}'$ becomes arbitrary.]
\item
Eigen-decompose equation~\eqref{eqn-channel-decomposition1} to obtain $\Delta_\pm$.
\item
Compute the channel sum $\ChSum(\Delta_+)$ using equation~\eqref{eqn-channel-sum-def}.
\item
Compute the lower and upper bounds on $\hat{\alpha}$ using equation~\eqref{eqn-final-bounds}.
\item
Given a bound, denoted as $\alpha$, solve for $p$ and $q$ using 
equations~\eqref{eqn-main-result-p} and \eqref{eqn-main-result-q}.
\end{enumerate}
We can repeat this procedure for a range of $\beta$ to obtain the lower- and upper-bound $(p,q)$ trade-off curves.

\section{Examples}
\label{sec-examples}

\subsection{Difference between bit-flip and phase-flip channels}
\label{sec-example-flip-channels}

Given the bit-flip and phase-flip channels,
\begin{eqnarray}
\label{eqn-example-bit-flip}
\mathcal{E}(\rho) &=& (1-a) I_2 \rho I_2 + a X \rho X
\\
\label{eqn-example-phase-flip}
\mathcal{F}(\rho) &=& (1-b) I_2 \rho I_2 + b Z \rho Z
\end{eqnarray}
where $a$ and $b$ are the bit-flip and phase-flip probabilities, and
\begin{eqnarray*}
I_2=
\begin{bmatrix}
1 & 0
\\
0 & 1
\end{bmatrix}, \:
X=
\begin{bmatrix}
0 & 1
\\
1 & 0
\end{bmatrix}, \:
Z=
\begin{bmatrix}
1 & 0
\\
0 & -1
\end{bmatrix},
\end{eqnarray*}
we compute the Choi matrices for the two channels and find the difference
\begin{equation}
\label{eqn-example-bit-phase-Choi-diff}
\eqalign{
&C_\mathcal{E}-\beta C_\mathcal{F}
\cr
=&
(1-a-\beta+b \beta) \ket{e_1}\bra{e_1}
-
b \beta \ket{e_2}\bra{e_2}
+
a \ket{e_3}\bra{e_3}
}
\end{equation}
where $\bra{e_1}=[1, 0, 0, 1]$, $\bra{e_2}=[1, 0, 0, -1]$, and $\bra{e_3}=[0, 1, 1, 0]$.
Next, we separate this into the positive and negative subspaces as in equation~\eqref{eqn-channel-decomposition1}.
Note that since we consider $a,b,\beta > 0 $, the second term of the equation is negative and the third term is positive, while the first term can be non-negative or negative.

Case 1: $1-a-\beta+b \beta \geq 0$.  According to equation~\eqref{eqn-channel-decomposition1}, we have 
\begin{eqnarray*}
\Delta_+
&=&
(1-a-\beta+b \beta) \ket{e_1}\bra{e_1}
+
a \ket{e_3}\bra{e_3} ,
\\
\Delta_-
&=&
b \beta \ket{e_2}\bra{e_2}
\end{eqnarray*}
and
\begin{eqnarray*}
\ChSum(\Delta_+)&=&
(1-\beta+b \beta)I_2 ,
\\
\ChSum(\Delta_-)&=&
b \beta I_2 .
\end{eqnarray*}
Therefore, using the bounds in equation~\eqref{eqn-final-bounds}, we obtain the optimal solution of problem~\eqref{eqn-problem-min-alpha} as 
$\hat{\alpha}=1-\beta+b \beta$.

Case 2: $1-a-\beta+b \beta < 0$.  According to equation~\eqref{eqn-channel-decomposition1}, we have 
\begin{eqnarray*}
\Delta_+
&=&
a \ket{e_3}\bra{e_3} ,
\\
\Delta_-
&=&
-(1-a-\beta+b \beta) \ket{e_1}\bra{e_1}
+
b \beta \ket{e_2}\bra{e_2}
\end{eqnarray*}
and
\begin{eqnarray*}
\ChSum(\Delta_+)&=&
a I_2 ,
\\
\ChSum(\Delta_-)&=&
(-1+a+\beta) I_2 .
\end{eqnarray*}
Therefore, using the bounds in equation~\eqref{eqn-final-bounds}, we obtain the optimal solution of problem~\eqref{eqn-problem-min-alpha} as 
$\hat{\alpha}=a$.
Note that we are able to obtain the optimal solution in both cases instead of upper and lower bounds.

Finally, with $\alpha$ found for each case, we can compute a relation for $p$ and $q$ using equations~\eqref{eqn-relation-p-alpha} and \eqref{eqn-relation-alpha-beta}:
\begin{equation}
\label{eqn-example-final-relation-pq}
\left\{
\begin{array}{ll}
p=b - bq &, \quad \text{if } 1-a-\beta+b \beta \geq 0
\\
q=a-a p &, \quad \text{if } 1-a-\beta+b \beta < 0
\\
\end{array}
\right.
.
\end{equation}
This relation is depicted as the solid curve in figure~\ref{fig:example-bit-phase}, where the top-left (bottom-right) part corresponds to the first (second) case in equation~\eqref{eqn-example-final-relation-pq}.
Essentially, the cusp in the figure is due to the transition from case 1 with 2 positive and 1 negative eigenvalues to case 2 with 1 positive and 2 negative eigenvalues in equation~\eqref{eqn-example-bit-phase-Choi-diff}.
\begin{figure}
\begin{center}
\includegraphics[width=.6\columnwidth]{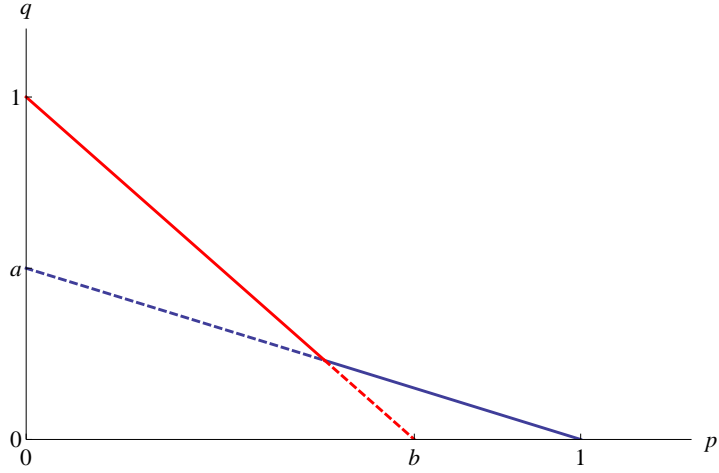}
\caption{
\label{fig:example-bit-phase}
Optimal trade-off curve for the 
mixing
probabilities $p$ and $q$ [defined in equations~\eqref{eqn-mixed-channel-E} and \eqref{eqn-mixed-channel-F}] for the bit- and phase-flip channels given in equations~\eqref{eqn-example-bit-flip} and \eqref{eqn-example-phase-flip}.
This curve is the solution to problem~\eqref{eqn-original-minimization-prob} or problem~\eqref{eqn-problem-min-alpha}.
}
\end{center}
\end{figure}

\begin{figure}
\begin{center}
\includegraphics[width=.6\columnwidth]{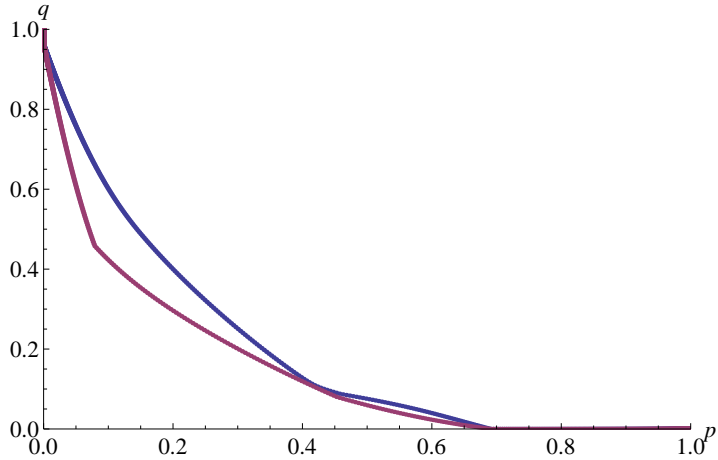}
\caption{
\label{fig:example-qubit-channels}
Lower- (purple) and upper-bound (blue) curves 
for the 
mixing
probabilities $p$ and $q$ [defined in equations~\eqref{eqn-mixed-channel-E} and \eqref{eqn-mixed-channel-F}] for two random qubit channels.
}
\end{center}
\end{figure}

\subsection{A pair of random qubit channels}
\label{sec-example-random-qubit-channels}

We randomly generated two qubits channels each having four Kraus operators and they are listed in 
\ref{app-example-random-qubit-channels}.
Using the procedure given in section~\ref{sec-procedure}, we compute the lower- and upper-bound curves which are shown in figure~\ref{fig:example-qubit-channels}.
We make two observations.
First, four cusps are obvious in the lower-bound curve, which are due to the transition of an eigenvalue of equation~\eqref{eqn-channel-decomposition1} from positive to negative (or vice versa).
Note that at most four cusps can occur since the dimension of the Choi matrices are four.
Second, there are regions where $q=0$ for a range of $p$ and where $p=0$ for a range of $q$ (the former is much bigger than the latter).
These regions correspond to the case that one channel contains another channel and we will clarify this concept in section~\ref{sec-channel-containment} later.

\begin{figure}
\begin{center}
\includegraphics[width=.6\columnwidth]{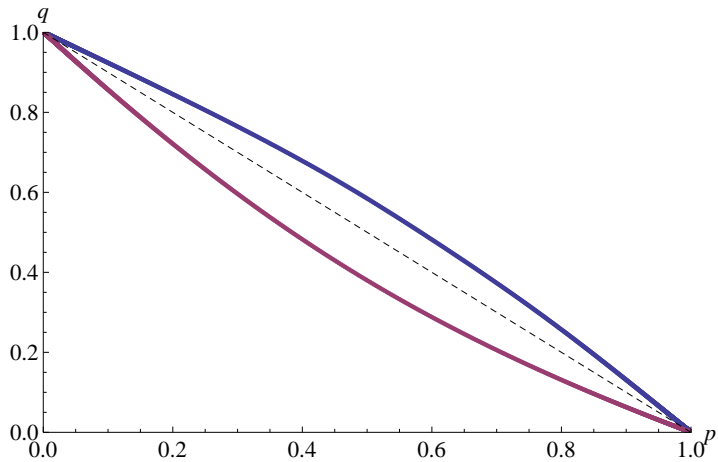}
\caption{
\label{fig:example-4d-channels}
Lower- (purple) and upper-bound (blue) curves 
for the 
mixing
probabilities $p$ and $q$ [defined in equations~\eqref{eqn-mixed-channel-E} and \eqref{eqn-mixed-channel-F}] for two random four-dimensional channels.
The solid curves are the bounds computed with equation~\eqref{eqn-final-bounds} and the dashed curve is $q=1-p$.
}
\end{center}
\end{figure}

\subsection{A pair of random four-dimensional channels}
\label{sec-example-random-4d-channels}

We randomly generated two four-dimensional channels each having four Kraus operators. (We do not list them here as they take up a lot of space.)
Figure~\ref{fig:example-4d-channels} shows the bounds.
We remark that the solid upper-bound curve computed with equation~\eqref{eqn-final-bounds} is not useful since it is above the $q=1-p$ line which is a trivial upper-bound.
Nevertheless, the lower-bound curve is useful since it allows only a narrow gap with the upper-bound line.

This example brings up an important point: in general, we should take the convex hull of an upper-bound curve as a refined upper-bound.

We also remark that the maximum number of cusps in the bounding curves is $16$ since the dimension of the channels is $4$.
However, they are not apparent in the figure.

\section
{Relation with distinguishability}
\label{sec-relation-distinguishability}

The diamond norm is related to the minimum error in discriminating between two quantum channels and is defined as
\begin{eqnarray*}
\lVert
\mathcal{G}
\rVert_\diamond
\triangleq
\max_\rho \lVert (I_\mathcal{K} \otimes \mathcal{G}) (\rho) \rVert_1
.
\end{eqnarray*}
Here, an ancillary Hilbert space $\mathcal{K}$ is introduced and 
$I_\mathcal{K}$ is the identity map acting on it.
The dimension of $I_\mathcal{K}$ is the same as the dimension of the Hilbert space of $\mathcal{G}$~\cite{Kitaev:1997}.
The minimum error in distinguishing between $\mathcal{E}$ and $\mathcal{F}$ is given by~\cite{Helstrom1976} 
\begin{eqnarray*}
P(\text{error})=
\frac{1}{2}
\left(1-
\frac{1}{2}
\lVert
\mathcal{E}-\mathcal{F}
\rVert_\diamond
\right).
\end{eqnarray*}

\subsection{Upper bound}

We can upper bound the diamond norm with the mixing probabilities $p$ and $q$ of our disguising problem as follows:
\begin{eqnarray}
\lVert
\mathcal{E}-\mathcal{F}
\rVert_\diamond
&=&
\lVert
(1-p)\mathcal{E}-(1-q)\mathcal{F}+p\mathcal{E}-q\mathcal{F}
\rVert_\diamond
\nonumber
\\
&=&
\lVert
q\mathcal{F}_\Delta-p\mathcal{E}_\Delta+p\mathcal{E}-q\mathcal{F}
\rVert_\diamond
\nonumber
\\
&\leq&
q \lVert \mathcal{F}_\Delta \rVert_\diamond + p \lVert \mathcal{E}_\Delta \rVert_\diamond + p \lVert \mathcal{E} \rVert_\diamond + q \lVert \mathcal{F} \rVert_\diamond
\nonumber
\\
&=&
2(p+q) ,
\label{eqn-diamond-p-q}
\end{eqnarray}
where the second line is due to equations~\eqref{eqn-mixed-channel-E} and \eqref{eqn-mixed-channel-F} with the disguising condition $\mathcal{E}'=\mathcal{F}'$ satisfied, the third line is due to the triangle inequality of the trace distance, and the fourth line comes from the fact that the trace norm of the channel output (a density matrix) is one.
Note that equation~\eqref{eqn-diamond-p-q} holds for any feasible $(p,q)$ satisfying $\mathcal{E}'=\mathcal{F}'$, not just the optimal $(p,q)$ trade-off curve.

When the two channels $\mathcal{E}$ and $\mathcal{F}$ are perfectly distinguishable, 
$\lVert
\mathcal{E}-\mathcal{F}
\rVert_\diamond = 2$.
On the other hand, in our disguising problem, $p+q=1$ is always achievable in equations~\eqref{eqn-mixed-channel-E} and \eqref{eqn-mixed-channel-F} since we can set $p=1-q$, $\mathcal{E}_\Delta=\mathcal{F}$, and $\mathcal{F}_\Delta=\mathcal{E}$.
Therefore, equation~\eqref{eqn-diamond-p-q} is tight in this case.

\subsection{Lower bound}

We focus on the case where $p=q$, which means that $\beta=1$.
In this case, from equation~\eqref{eqn-main-result-final-bounds}, the smallest $p=q$ satisfies
\begin{equation}
\label{eqn-diamond-norm-initial-p-q-T-Delta}
\frac{p}{1-p}=\frac{q}{1-q} \leq \lVert \ChSum(\Delta_+)\rVert ,
\end{equation}
where $\Delta_+$ is the positive subspace of $C_\mathcal{E}- C_\mathcal{F}$.

We divide $\Delta_+$ (of size $n^2 \times n^2$) into $n \times n$ blocks of equal size and
denote the $(i,j)$ block as $\Delta_{+ij}$.
Thus, 
$\Delta_+=\sum_{i,j=0}^{n-1} \ket{i}\bra{j} \otimes \Delta_{+ij}$, and 
it follows from the definition of $\ChSum$ in equation~\eqref{eqn-channel-sum-def} that
\begin{eqnarray}
\lVert \ChSum(\Delta_+)\rVert 
&=&
\left\lVert 
\sum_{i,j=0}^{n-1} \ket{i}\bra{j} \cdot \tr[\Delta_{+ij}]
\right\rVert
\nonumber 
\\
&=&
\max_{\ket{\phi}}
\sum_{i,j=0}^{n-1} 
\braket{\phi}{i}
\braket{j}{\phi}
\cdot \sum_{k=0}^{n-1} \bra{z_k}\Delta_{+ij} \ket{z_k}
\nonumber 
\\
&=&
\max_{\ket{\phi}}
\sum_{k=0}^{n-1} 
\bra{\phi}\otimes\bra{z_k}
\Delta_+
\ket{\phi}\otimes\ket{z_k}
\nonumber 
\\
&\leq&
\:
n
\max_{\ket{\phi},\ket{z}}
\bra{\phi}\otimes\bra{z}
\Delta_+
\ket{\phi}\otimes\ket{z} ,
\label{eqn-diamond-norm-special-max-Delta}
\end{eqnarray}
where on the second and third lines $\{\ket{z_k}\}$ is an orthonormal basis, and $\braket{\phi}{\phi}=\braket{z}{z}=1$.

Next, we consider the diamond norm:
\begin{eqnarray}
&&
\lVert
\mathcal{E}-\mathcal{F}
\rVert_\diamond
\nonumber
\\
&\geq&
\max_{\ket{\psi}}
\lVert
(I\otimes(\mathcal{E}-\mathcal{F}))(\ket{\psi}\bra{\psi})
\rVert
\nonumber
\\
\label{eqn-diamond-norm-expansion-1}
&=&
\max_{\ket{\sigma},\ket{\psi}}
|
\bra{\sigma}
(I\otimes(\mathcal{E}-\mathcal{F}))(\ket{\psi}\bra{\psi})
\ket{\sigma}
|
\triangleq Q ,
\end{eqnarray}
where the second line is due to the fact that the trace norm is no less than the spectral norm, and  
both the auxiliary system and the original system have dimension $n$.
Without loss of generality, using the Schmidt decomposition 
{\textcolor{mycolor2}{
on the auxiliary and original systems, 
}}%
we can express
\begin{eqnarray*}
\ket{\sigma}&=&\sum_{i=0}^{n-1} \gamma_i \ket{B_i} \ket{\sigma_i}
\\
\ket{\psi}&=&\sum_{i=0}^{n-1} \lambda_i \ket{B_i} \ket{\psi_i} ,
\end{eqnarray*}
where $\{\ket{B_i}\}$ is an orthonormal basis,
\begin{equation}
\label{eqn-diamond-norm-constraints-1}
\eqalign{
&\braket{\sigma_i}{\sigma_i}=\braket{\psi_i}{\psi_i}=1 \text{ for all $i=0,\ldots,n-1$, and}
\cr
&\sum_{i=0}^{n-1} |\gamma_i|^2=\sum_{i=0}^{n-1} |\lambda_i|^2=1.
}
\end{equation}
Continuing with equation~\eqref{eqn-diamond-norm-expansion-1},
the term to be maximized is equal to
\begin{eqnarray*}
&&
\left|
\sum_{i,j=0}^{n-1} 
\gamma_i^* \lambda_i \bra{\sigma_i}
(\mathcal{E}-\mathcal{F})(\ket{\psi_i}\bra{\psi_j})
\ket{\sigma_j} \lambda_j^* \gamma_j 
\right|
\\
&=&
\left|
\bra{v}
G
\ket{v}
\right|
\end{eqnarray*}
where
\begin{eqnarray*}
G\triangleq
\begin{bmatrix}
(\mathcal{E}-\mathcal{F})(\ket{\psi_0}\bra{\psi_0}) &(\mathcal{E}-\mathcal{F})(\ket{\psi_0}\bra{\psi_1})
&\cdots&
\\
(\mathcal{E}-\mathcal{F})(\ket{\psi_1}\bra{\psi_0})&\ddots
\\
\vdots
\\
&&&
(\mathcal{E}-\mathcal{F})(\ket{\psi_{n-1}}\bra{\psi_{n-1}})
\end{bmatrix}
\end{eqnarray*}
and
\begin{eqnarray*}
\ket{v}\triangleq
\begin{bmatrix}
\ket{\sigma_0} \lambda_0^* \gamma_0 
\\
\vdots
\\
\ket{\sigma_{n-1}} \lambda_{n-1}^* \gamma_{n-1} 
\end{bmatrix}.
\end{eqnarray*}
Note that $G$ is not a standard Choi matrix since there is no requirement that $\{\ket{\psi_i}\}$ is an orthonormal basis and also $\braket{v}{v}=\sum_i |\lambda_i|^2 |\gamma_i|^2$ may not be unity.

Continuing with equation~\eqref{eqn-diamond-norm-expansion-1},
we have
\begin{eqnarray*}
Q\triangleq
\max_{\left\{\ket{\sigma_i}\right\},\{\gamma_i\}}
\:
\max_{\left\{\ket{\psi_i}\right\},\{\lambda_i\}}
\left|
\bra{v}
G
\ket{v}
\right|
\end{eqnarray*}
subject to the constraints in equation~\eqref{eqn-diamond-norm-constraints-1}.
Since $\ket{\psi_i}=\ket{i}$ and $\lambda_i=1/\sqrt{n}$ satisfy the constraints,
\begin{eqnarray*}
Q
&\geq&
\max_{\ket{\tilde{v}}}
\frac{1}{n}
\left|
\bra{\tilde{v}} 
(C_\mathcal{E}- C_\mathcal{F})
\ket{\tilde{v}}
\right|
\\
&=&
\frac{1}{n}
\lVert 
C_\mathcal{E}- C_\mathcal{F}
\rVert
\\
&=&
\frac{1}{n}
\max(
\lVert 
\Delta_+
\rVert
,
\lVert 
\Delta_-
\rVert
) ,
\end{eqnarray*}
where the maximization is over any vector $\ket{\tilde{v}}$ with $\braket{\tilde{v}}{\tilde{v}}=1$ and 
$G$ with the substitution $\ket{\psi_i}=\ket{i}$ is equal to $C_\mathcal{E}- C_\mathcal{F}$.
Finally, note that
$\lVert 
\Delta_+
\rVert$ is larger than or equal to the maximization term in equation~\eqref{eqn-diamond-norm-special-max-Delta}.
Therefore,
\begin{equation}
\lVert
\mathcal{E}-\mathcal{F}
\rVert_\diamond
\geq
\frac{1}{n^2}
\lVert \ChSum(\Delta_+)\rVert 
\end{equation}
and combining with equation~\eqref{eqn-diamond-norm-initial-p-q-T-Delta}, the smallest $p=q$ must satisfy
\begin{equation}
\label{eqn-diamond-norm-lower-bound-final}
\frac{p}{n^2(1-p)}
\leq
\lVert
\mathcal{E}-\mathcal{F}
\rVert_\diamond .
\end{equation}

\subsection{Summary}
Using equations~\eqref{eqn-diamond-p-q} and \eqref{eqn-diamond-norm-lower-bound-final}, the smallest $p=q$
must satisfy
\begin{equation}
\label{eqn-diamond-norm-lower-upper-bounds-final}
\frac{p}{n^2(1-p)}
\leq
\lVert
\mathcal{E}-\mathcal{F}
\rVert_\diamond 
\leq
4p
.
\end{equation}
This shows that the disguising problem and the distinguishability problem are dual: when two channels are easy to distinguish (the diamond norm is large), it requires great effort to disguise one channel as the other ($p=q$ is large); and the reverse also holds.
Note that the lower bound of equation~\eqref{eqn-diamond-norm-lower-upper-bounds-final} is less than or equal to the upper bound since $p=q$ is at most $1/2$ due to the fact that any point on the line $q=1-p$ is feasible (cf. section~\ref{sec-problem-formulation-detail}).

\section{Other remarks}
\label{sec-remarks}

\subsection{Triangle inequality}

We apply the notion of triangle inequality to our mixing probabilities.
Suppose $\mathcal{E}$ and $\mathcal{F}$ are compatible with mixing probabilities $(p,q)$ and $\mathcal{G}$ and $\mathcal{F}$ with $(p',q')$, meaning that
\begin{eqnarray}
(1-p) \mathcal{E}(\rho)+p \mathcal{E}_\Delta(\rho) &=& (1-q) \mathcal{F}(\rho)+q \mathcal{F}_\Delta(\rho),
\label{app-triangle-E-eqn}
\\ 
(1-p') \mathcal{G}(\rho)+p' \mathcal{G}_\Delta(\rho) &=& (1-q') \mathcal{F}(\rho)+q' \mathcal{F}_\Delta'(\rho).
\label{app-triangle-G-eqn}
\end{eqnarray}
Note that the harmonizing channels $\mathcal{F}_\Delta$ and $\mathcal{F}_\Delta'$ are different in general.
We want to infer the distance profiles $(p'',q'')$ for $\mathcal{E}$ and $\mathcal{G}$ from the distance profiles $(p,q)$ and $(p',q')$.
To do this, we propose the following method: 
cross-multiply equations~\eqref{app-triangle-E-eqn} and \eqref{app-triangle-G-eqn} to make the coefficients of $\mathcal{F}$ equal and
add additional terms to the two resultant equations
to make the overall harmonizing channels on the right-hand sides equal.
The result is 
\begin{eqnarray*}
&&
(1-q')\left[(1-p) \mathcal{E}+p \mathcal{E}_\Delta\right] + (1-q) q' \mathcal{F}_\Delta' 
\\
&&\hspace{1cm}
= 
(1-q')\left[(1-q) \mathcal{F}+q \mathcal{F}_\Delta\right] + (1-q) q' \mathcal{F}_\Delta', \text{ and}
\\ 
&&
(1-q)\left[(1-p') \mathcal{G}+p' \mathcal{G}_\Delta\right] + (1-q')q \mathcal{F}_\Delta 
\\
&&\hspace{1cm}
=
 (1-q)\left[(1-q') \mathcal{F}+q' \mathcal{F}_\Delta'\right] + (1-q')q \mathcal{F}_\Delta,
\end{eqnarray*}
where we drop the dependence on $\rho$ for simpler notation and assume 
not both $q$ and $q'$ equal to $1$.
Thus, the two left-hand sides are equal, giving
\begin{eqnarray*}
&&
(1-q')\left[(1-p) \mathcal{E}+p \mathcal{E}_\Delta\right] + (1-q) q' \mathcal{F}_\Delta' 
\\
&&
\hspace{1cm}
= 
(1-q)\left[(1-p') \mathcal{G}+p' \mathcal{G}_\Delta\right] + (1-q')q \mathcal{F}_\Delta .
\end{eqnarray*}
This equation is interpreted as $\mathcal{E}$ occurring with probability 
$(1-q')(1-p)/(1-q q')$
and its harmonizing channel with probability 
\begin{equation}
\label{app-triangle-final-p}
p''=\frac{p(1-q')+(1-q)q'}{1-q q'} ,
\end{equation}
and $\mathcal{G}$ occurring with probability
$(1-q)(1-p')/(1-q q')$
and its harmonizing channel with probability 
\begin{equation}
\label{app-triangle-final-q}
q''=\frac{p'(1-q)+(1-q')q}{1-q q'} .
\end{equation}
With equations~\eqref{app-triangle-final-p}--\eqref{app-triangle-final-q}, given an achievable pair of mixing probabilities $(p,q)$ for $\mathcal{E}$ and $\mathcal{F}$ and another pair $(p',q')$ for $\mathcal{F}$ and $\mathcal{G}$, we can compute an achievable pair $(p'',q'')$ for $\mathcal{E}$ and $\mathcal{G}$.
Tracing out the entire distance profiles $(p,q)$ and $(p',q')$ produces an achievable region $(p'',q'')$.
Note that this region is not in general a curve.
Nevertheless, the bounding curve to this region can be regarded as a distance profile for $\mathcal{E}$ and $\mathcal{G}$.
This curve is certainly achievable but may not be optimal and so it represents an upper bound to the optimal trade-off distance profile.

Note that even though 
we have 
the assumption $q q'<1$, we can still obtain the end points.
In particular, when $(p,q)=(0,1)$ and $q'<1$, we get $(p'',q'')=(0,1)$; 
when $(p',q')=(0,1)$ and $q<1$, we get $(p'',q'')=(1,0)$.

As an example, consider $\mathcal{E}$ and $\mathcal{F}$ given in 
equations~\eqref{eqn-example-bit-flip} and \eqref{eqn-example-phase-flip} and 
\begin{equation*}
\mathcal{G}(\rho) = (1-c) I_2 \rho I_2 + c (XZ) \rho (ZX)
\end{equation*}
with $a=b=c=0.2$.
Figure~\ref{fig:example-triangle} shows the achievable region for $\mathcal{E}$ and $\mathcal{G}$ obtained by 
equations~\eqref{app-triangle-final-p} and \eqref{app-triangle-final-q} together with the optimal curve obtained in section~\ref{sec-example-flip-channels}.
Note that due to symmetry, any pair of $\mathcal{E}$, $\mathcal{F}$, and $\mathcal{G}$ has the same optimal trade-off curve.

\begin{figure}
\begin{center}
\includegraphics[width=.6\columnwidth]{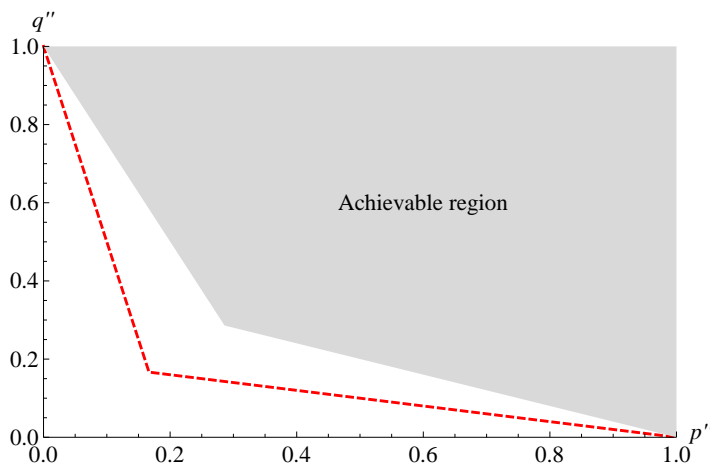}
\caption{
\label{fig:example-triangle}
Achievable region for $\mathcal{E}$ and $\mathcal{G}$ obtained by 
equations~\eqref{app-triangle-final-p} and \eqref{app-triangle-final-q} using the mixing probabilities of $\mathcal{E}$ and $\mathcal{F}$ and the mixing probabilities of $\mathcal{F}$ and $\mathcal{G}$.
The dashed curve (red) is the optimal trade-off.
}
\end{center}
\end{figure}

We now consider the triangle inequality for two scalar distances derived from our 2-dimensional measure: (i) the minimum of $p$ subject to $p=q$ and (ii) the minimum of $p+q$.
For case (i), suppose that the minimum for $\mathcal{E}$ and $\mathcal{F}$ is $p_0$ with $q_0=p_0$ and the minimum for $\mathcal{F}$ and $\mathcal{G}$ is $p_0'$ with $q_0'=p_0'$.
Substituting these four parameters into equations~\eqref{app-triangle-final-p} and \eqref{app-triangle-final-q} gives 
\begin{equation*}
p''=\frac{p_0(1-p_0')+(1-p_0)p_0'}{1-p_0 p_0'}=q'' .
\end{equation*}
This means that the $p=q$ condition is preserved.
Since $(1-p_0')(1-p_0 p_0') \leq 1$ and $(1-p_0)(1-p_0 p_0') \leq 1$, 
we have $p'' \leq p_0 + p_0'$.
Finally, since $p''$ is only an achievable upper bound to the optimal mixing probability for $\mathcal{E}$ and $\mathcal{G}$, the triangle inequality is satisfied.

For case (ii), suppose that the minimum for $\mathcal{E}$ and $\mathcal{F}$ is $p_0+q_0$ 
and the minimum for $\mathcal{F}$ and $\mathcal{G}$ is $p_0'+q_0'$ .
Substituting these four parameters into equations~\eqref{app-triangle-final-p} and \eqref{app-triangle-final-q} gives 
\begin{equation*}
p''+q''=\frac{(p_0+q_0)(1-q_0') + (p_0'+q_0')(1-q_0)}{1-q_0 q_0'}.
\end{equation*}
Using the same argument as in case (i), we have $p''+q'' \leq (p_0+q_0) + (p_0'+q_0')$ and the triangle inequality is satisfied.

\subsection
{Channel containment}
\label{sec-channel-containment}

Given two quantum channels $\mathcal{E}$ and $\mathcal{F}$, we introduce the notion that
$\mathcal{E}$ contains $\mathcal{F}$ 
if
$\mathcal{F}$ is
part of a mixture of $\mathcal{E}$: 
\begin{equation}
\label{eqn-containment-condition}
\mathcal{E}(\rho)
=
(1-q) \mathcal{F}(\rho) + q  \mathcal{F}_\Delta (\rho)
\end{equation}
where $\mathcal{F}_\Delta$ is some quantum channel and we require that $q<1$ to make the containment of $\mathcal{F}$ non-trivial.
To see if equation~\eqref{eqn-containment-condition} holds, we convert it to the Choi representation and proceed as in equation~\eqref{eqn-channels-difference1} with $p=0$.
The question becomes for what values of $q$ is $C_\mathcal{E}-(1-q) C_\mathcal{F}=q C_{\mathcal{F}_\Delta}$ PSD.
This is the same as asking whether $\Delta_-=0$ in equation~\eqref{eqn-channel-decomposition1}.
In general, we fix a value of $q$ and perform eigen-decomposition to see if $\Delta_-=0$.
But for the case that $C_\mathcal{F}$ is invertible, we can find the minimum $q$ by equating $1-q$ with the minimum eigenvalue of
$C_\mathcal{F}^{-1/2} C_\mathcal{E} C_\mathcal{F}^{-1/2}$.

Note that 
$C_{\mathcal{F}_\Delta}$ is automatically trace-preserving, since
$q\ChSum(C_{\mathcal{F}_\Delta})=\ChSum(C_\mathcal{E}-(1-q) C_\mathcal{F})=I-(1-q)I=qI$ as both $\mathcal{E}$ and $\mathcal{F}$ are trace-preserving.

\subsection{Composition of quantum channels}

Suppose that
$\mathcal{E}_i$ and $\mathcal{F}_i$ can be made compatible with $p_i$ and $q_i$ according to the processing in 
equations~\eqref{eqn-mixed-channel-E}--\eqref{eqn-mixed-channel-F}, where $i=1,2$.
This means that
$\mathcal{E}_i'(\rho)
=\mathcal{F}_i'(\rho)$ or
\begin{equation*}
(1-p_i) \mathcal{E}_i(\rho) + p_i  \mathcal{E}_{\Delta i} (\rho)
=
(1-q_i) \mathcal{F}_i(\rho) + q_i  \mathcal{F}_{\Delta i} (\rho)
\end{equation*}
for all density matrices $\rho$ and $i=1,2$.
Then the composed quantum channels 
$\mathcal{E}_2 \circ \mathcal{E}_1$ and
$\mathcal{F}_2 \circ \mathcal{F}_1$
can also be made compatible with $p=p_1+p_2-p_1 p_2$ and $q=q_1+q_2-q_1 q_2$:
\begin{equation}
(1-p) \mathcal{E}_2 \circ \mathcal{E}_1(\rho) + p  \mathcal{E}_\Delta (\rho)
=
(1-q) \mathcal{F}_2 \circ \mathcal{F}_1(\rho) + q  \mathcal{F}_\Delta (\rho) .
\end{equation}

To show this, note that
since 
{\textcolor{mycolor2}{
$\mathcal{E}_i(\rho)$ and $\mathcal{F}_i(\rho)$ are density matrices for any $\rho$, 
}}%
the following holds:
\begin{equation}
\mathcal{E}_2'(\mathcal{E}_1'(\rho))
=
\mathcal{F}_2'(\mathcal{F}_1'(\rho)) .
\end{equation}
Expansion of the LHS gives
\begin{eqnarray}
\mathcal{E}_2'\circ \mathcal{E}_1'
&=
&
(1-p_1)(1-p_2)\mathcal{E}_2 \circ \mathcal{E}_1
\label{eqn-composition-composed-channel-E}
+
p_1(1-p_2) \mathcal{E}_2 \circ \mathcal{E}_{\Delta 1}
\\
&&
+
(1-p_1)p_2 \mathcal{E}_{\Delta 2} \circ \mathcal{E}_1
\nonumber
+
p_1 p_2\mathcal{E}_{\Delta 2} \circ \mathcal{E}_{\Delta 1} .
\nonumber
\end{eqnarray}
We can readily see that the first term is the original composed channel with 
mixing
probability $1-p=(1-p_1)(1-p_2)$ and the sum of the last three terms represents the 
harmonizing channel
$\mathcal{E}_\Delta$ with 
mixing
probability $p=p_1+p_2-p_1 p_2$.
Together with a similar argument for $\mathcal{F}_2'\circ \mathcal{F}_1'$ proves the claim.

We can also argue that 
the composed quantum channels 
$\mathcal{E}_2 \circ \mathcal{E}_1$ and
$\mathcal{F}_2 \circ \mathcal{F}_1$
can be made compatible with $p=p_1+p_2$ and $q=q_1+q_2$,
by breaking up the first term of equation~\eqref{eqn-composition-composed-channel-E} and allocating the portion $p_1 p_2 \mathcal{E}_2 \circ \mathcal{E}_1$ to the 
harmonizing channel
(and similarly for $\mathcal{F}$).

\section{Application to quantum cryptography}
\label{sec-application}

The disguising condition $\mathcal{E}'=\mathcal{F}'$ can be used to upper bound the key generation rate in quantum cryptography~\cite{Bennett1984,Ekert1991}.
The intuitive idea is that the more easily the eavesdropper's channel can be disguised as the legitimate user's channel, the smaller is the amount of the generated key.
We establish this idea quantitatively relating the mixing probability and the key generation rate.

In quantum cryptography with one-way forward communications, Alice repeatedly sends a quantum state $\rho_a$ to Franky to establish a secret key, where $a=(a_0, a_1)$ and $a_0$ ($a_1$) is Alice's raw key basis (value) chosen independently between different transmissions.
After the reception of the sequence of states by Franky, Alice sends classical information (including basis information, error correction information, and privacy amplification information) to him in order to correct bit errors and remove any information the eavesdropper Eve may have on the final key.
Suppose that Eve launches a collective attack~\cite{Biham1997} which means that she applies the same unitary transformation $U$ to each state sent by Alice (with sufficient ancillas).
Thus, Franky's channel and Eve's channel are given as follows:
\begin{eqnarray}
\mathcal{F}(\rho_a)&=&\tr_{\tt E}( U (\rho_a \otimes \ket{0}\bra{0}) U^\dag )
\\
\mathcal{E}(\rho_a)&=&\tr_{\tt F}( U (\rho_a \otimes \ket{0}\bra{0}) U^\dag )
\end{eqnarray}
where we assume without loss of generality (w.l.o.g.) that the entire Hilbert space is divided into two systems {\tt E} and {\tt F}.
Furthermore, for simplicity and w.l.o.g., we assume that {\tt E} and {\tt F} have the same dimensions $n$ (which can be assured by padding zeros as needed).

A key rate upper bound is given by the classical secret key capacity formula~\cite{Csiszar1978}
\begin{equation}
\label{eqn-QKD-key-rate}
R=\sup_{\substack{U\leftarrow A\\V\leftarrow U}} I(U;F|V)-I(U;E|V) .
\end{equation}
Note that the use of the classical formula is valid since when considering the upper bound, we can assume one particular strategy of Eve, which is to measure her quantum states separately.
Here, $A$ is Alice's random variable holding the raw key $a$, $F$ and $E$ are the Franky's and Eve's random variables holding the measurement outcomes $f$ and $e$ respectively.

We consider the upper bound of the key rate for the case where the disguising condition is
\begin{equation}
\label{eqn-qkd-application-disguising-condition1}
\mathcal{F}(\rho)
=
(1-p) \mathcal{E}(\rho) + p  \mathcal{E}_\Delta (\rho) .
\end{equation}
That is, $1-p$ fraction of Franky's channel is Eve's channel.
Thus, one would expect that only the remaining fraction of $p$ could be used to generate a secret key and the key rate would be on the order of $p$.

For simplicity of discussion, instead of equation~\eqref{eqn-QKD-key-rate}, we bound the key rate expression without the processing:
\begin{equation}
\label{eqn-QKD-key-rate2}
R'=I(A;F)-I(A;E) .
\end{equation}

%

Suppose that Bob's POVM is $\{ M_{f,a_0}\}$ where $\sum_f M_{f,a_0}=I$ and 
it is dependent on the raw key basis $a_0$.
Applying this POVM to
equation~\eqref{eqn-qkd-application-disguising-condition1}
produces classical probability distributions
\begin{eqnarray*}
P_{AF}(a,f)&=&P_A(a) \tr ( M_{f,a_0} \mathcal{F}(\rho_a))
\\
P_{AE}(a,f)&=&P_A(a) \tr ( M_{f,a_0} \mathcal{E}(\rho_a))
\\
P_{AE_\Delta}(a,f)&=&P_A(a) \tr ( M_{f,a_0} \mathcal{E}_\Delta(\rho_a))
\end{eqnarray*}
which are related by
\begin{equation}
\label{eqn-qkd-application-prob-dist-AF-1}
P_{AF}(a,f)=
(1-p) P_{AE}(a,f)
+
p P_{AE_\Delta}(a,f).
\end{equation}

Note that this relation can be explained by
the following hypothetical probability distribution
\begin{equation*}
P_{AFZ}(a,f,z)
=
\begin{cases}
(1-p) P_{AE}(a,f), & \mbox{if } z=0
\\
p P_{AE_\Delta}(a,f), & \mbox{if } z=1 
\end{cases}
\end{equation*}
in that
$\sum_{z=0,1} P_{AFZ}(a,f,z)$ is equal to
equation~\eqref{eqn-qkd-application-prob-dist-AF-1}.

Now, we bound the first term of equation~\eqref{eqn-QKD-key-rate2} as follows:
\begin{eqnarray*}
I(A;F) &\leq& I(A;FZ)
\\
&=&
I(A;F|Z)+I(A;Z)
\\
&=&
(1-p) I(A;F|z=0) + p I(A;F|z=1)
\\
&=&
(1-p) I(A;E) + p I(A;E_\Delta)
\end{eqnarray*}
where on the second line we have $I(A;Z)=0$ since $P_{AZ}(a,z)=P_A(a) P_Z(z)$.
Therefore, using equation~\eqref{eqn-QKD-key-rate2}, the key rate is bounded as
\begin{eqnarray}
R'
&\leq&
(1-p) I(A;E) + p I(A;E_\Delta) - I(A;E)
\nonumber
\\
&=&
p [ I(A;E_\Delta) - I(A;E) ]
\nonumber
\\
&\leq&
p \log_2 n
\label{eqn-QKD-final-upper-bound-formula}
\end{eqnarray}
where the last inequality is due to Holevo-Schumacher-Westmoreland channel capacity theorem~\cite{Holevo:1998,Schumacher:1997} and the fact that the maximum entropy for an $n$-dimensional state is $\log_2 n$.
A similar analysis can be applied to the original key rate expression in equation~\eqref{eqn-QKD-key-rate} to obtain the same upper bound in equation~\eqref{eqn-QKD-final-upper-bound-formula}.

\section{Conclusions}
\label{sec-conclusions}

The disguising problem tries to make two quantum channels identical, which is the reverse of the distinguishability problem which tries to maximize their difference in the measurement statistics in order to discriminate between them.
Indeed, we showed that the two problems are related by proving that a certain combination of the mixing probabilities of the disguising problem upper bounds the diamond norm of the distinguishability problem.
We also showed that the triangle equality holds for two scalar distances derived from the mixing probabilities.

Conventional measures on quantum channels are mostly based on trace distance or fidelity which are both concepts derived from measuring the distance between quantum states.
In this paper, we propose a new measure 
genuinely
for quantum channels.
Note that the application of our measure to quantum states is possible but the result would be rather trivial since there is no more the need of making sure the 
harmonizing channels
satisfy the TP condition for a linear map (which is ensured by the addition of $X$ in equations~\eqref{eqn-delta-channels1-F} and \eqref{eqn-delta-channels1-E}).
This extra condition makes the calculation of our measure for quantum channels more difficult.
Nevertheless, we obtain analytical lower- and upper-bounds which generate curves that are close 
{\textcolor{mycolor2}{
to each other 
}}%
in many cases.

Our measure is based on the notion of minimizing the probabilities of channel mixing, which can be viewed as the costs for an channel intervener to make two channels the same.
We show how these costs are linked to the key generation rate in quantum key distribution.
The investigation of how these costs are linked to other quantum information processing tasks 
is a topic for future research.
Also, open problems include
efficient/approximate computation of our measure, and the effect of extending the Hilbert space dimensions of the channels by including ancillary systems (i.e., $\mathcal{E}$ becomes $\mathcal{E}\otimes I$) in the calculation of our measure.

\section*{Acknowledgments}%
We thank Masahito Hayashi, Seung-Hyeok Kye, and Jonathan Oppenheim for enlightening discussion.
This work is supported in part by
RGC under Grant No. 700709P and 700712P of the HKSAR Government.

\appendix

\section{Useful results related to quantum channels}
\label{app-preliminary}

This section discusses the tools and definitions related to quantum channels.
A quantum channel 
is a linear map that is completely-positive (CP)
and
trace-preserving (TP).

\begin{theorem}
\label{thm-Choi-CP}
{\rm
(Choi's theorem~\cite{Choi1975:CPMap})
Given a linear map $\mathcal{E}$ and its Choi matrix 
$C_\mathcal{E}$,
$C_\mathcal{E}$ is PSD if and only if $\mathcal{E}$ is a completely-positive map.
}
\end{theorem}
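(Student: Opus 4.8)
The plan is to prove both implications by exploiting the single identity $C_\mathcal{E} = (I_n \otimes \mathcal{E})(\ket{\Omega}\bra{\Omega})$, where $\ket{\Omega} = \sum_{i=0}^{n-1}\ket{i}\ket{i}$ is the (unnormalized) maximally entangled vector. This identity is immediate from the definition~\eqref{eqn-choi-matrix}, since $\ket{\Omega}\bra{\Omega} = \sum_{i,j}\ket{i}\bra{j}\otimes\ket{i}\bra{j}$ and $I_n\otimes\mathcal{E}$ acts only on the second tensor factor, reproducing the block $\mathcal{E}(\ket{i}\bra{j})$ in each slot. The entire argument then reduces to tracking how positivity is transported through this one map.

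For the forward direction (CP $\implies$ PSD) I would invoke the definition of complete positivity directly. If $\mathcal{E}$ is CP, then $I_p\otimes\mathcal{E}$ is a positive map for every positive integer $p$; taking $p=n$ and noting that $\ket{\Omega}\bra{\Omega}\succeq 0$ (it is a rank-one projector up to scale), we conclude $C_\mathcal{E}=(I_n\otimes\mathcal{E})(\ket{\Omega}\bra{\Omega})\succeq 0$. This direction is essentially a one-line consequence of the definition and carries no real difficulty.

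The reverse direction (PSD $\implies$ CP) is the substantive part, and this is where I expect the real work to lie. Assuming $C_\mathcal{E}\succeq 0$, I would spectrally decompose it as $C_\mathcal{E}=\sum_k \ket{v_k}\bra{v_k}$, absorbing the (non-negative) eigenvalues into the eigenvectors $\ket{v_k}\in\mathbb{C}^n\otimes\mathbb{C}^n$. I would then \emph{reshape} each eigenvector into an $n\times n$ Kraus operator $E_k$ defined by $E_k\ket{i} := (\bra{i}\otimes I_n)\ket{v_k}$, which is the inverse of the vectorization map $\ket{\cdot}$ already used in the paper. The claim is that $\mathcal{E}(\rho)=\sum_k E_k\rho E_k^\dag$, a Kraus form that is manifestly completely positive.

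To verify the claim it suffices, by linearity, to check it on the basis matrices $\ket{a}\bra{b}$. Here one computes $\sum_k E_k\ket{a}\bra{b}E_k^\dag = (\bra{a}\otimes I_n)\,C_\mathcal{E}\,(\ket{b}\otimes I_n)$ and substitutes the block form~\eqref{eqn-choi-matrix}, whereupon the double sum collapses via $\bra{a}\ket{i}\bra{j}\ket{b}=\delta_{ai}\delta_{jb}$ to exactly $\mathcal{E}(\ket{a}\bra{b})$. The main obstacle is purely bookkeeping: fixing the reshaping convention so that it is consistent with the paper's block ordering of $C_\mathcal{E}$ (rows and columns indexed by $\ket{i}\bra{j}$), so that the $(\bra{a}\otimes I_n)(\cdots)(\ket{b}\otimes I_n)$ contraction picks out the $(a,b)$ block $\mathcal{E}(\ket{a}\bra{b})$ rather than a transpose or conjugate of it. Once the indexing is pinned down the identity is immediate and the proof closes.
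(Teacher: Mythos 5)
Your proof is correct and is the standard argument for Choi's theorem; the paper itself only cites the result without proof, but the machinery it sets up immediately afterwards (eigen-decomposing $C_\mathcal{E}$ into $\sum_i \lambda_i \ket{E_i}\bra{E_i}$ and reshaping each $\ket{E_i}$ into a Kraus operator via equation~\eqref{eqn-vector-form}) is exactly your reverse-direction construction, and your reshaping convention $E_k\ket{i}=(\bra{i}\otimes I_n)\ket{v_k}$ is consistent with the paper's column-stacking $\operatorname{vec}$ and its block ordering of $C_\mathcal{E}$. Nothing to fix.
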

When
$C_\mathcal{E}$ is Hermitian, $\mathcal{E}$ can also be represented in
the operator-sum form:
\begin{equation}
\label{eqn-channel-operator-sum-form}
\mathcal{E}(\rho)=\sum_i \lambda_i E_i \rho E_i^\dag, 
\end{equation}
where $\lambda_i \in \IR$ and $E_i \in \IC^{n,n}$.
This can be seen by taking some decomposition (e.g., eigen-decomposition) of $C_\mathcal{E}$ to be
\begin{equation}
\label{eqn-choi-decomposition}
C_\mathcal{E}=\sum_i \lambda_i \ket{E_i} \bra{E_i} 
\end{equation}
where $\lambda_i \in \IR$ and $\ket{E_i} \in \IC^{n^2,1}$, and 
rearranging the vector $\ket{E_i}$ into the square matrix $E_i$ as follows:
\begin{eqnarray}
\label{eqn-vector-form}
\ket{E_i} \equiv
\begin{bmatrix}
E_i(1,1)\\
E_i(2,1)\\
\vdots\\
E_i(n,1)\\
E_i(1,2)\\
\vdots\\
E_i(n,2)\\
\vdots\\
E_i(n,n)
\end{bmatrix}
=\operatorname{vec}(E_i)
\end{eqnarray}
where $E_i(k,l)$ is the $(k,l)$ entry of $E_i$, and the $\operatorname{vec}$ operator creates a vector by stacking the columns of its operand (see, e.g., Ref.~\cite{Horn1994}).
The dimension of $E_i$ is $n \times n$ and
the dimension of $C_\mathcal{E}$ is $n^2 \times n^2$.
Note that the operator-sum form is not unique; there can be more than one such form corresponding to the same Choi matrix.
\begin{observation}
{\rm
The Choi matrix of any channel given in the operator-sum form of equation~\eqref{eqn-channel-operator-sum-form} can be constructed by using
equation~\eqref{eqn-choi-matrix} or
equation~\eqref{eqn-choi-decomposition}.
The equivalence of these two ways can be checked easily by direct expansion.
}
\end{observation}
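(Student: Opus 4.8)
The plan is to start from the operator-sum form~\eqref{eqn-channel-operator-sum-form}, $\mathcal{E}(\rho)=\sum_i \lambda_i E_i \rho E_i^\dag$, substitute it into the block definition of the Choi matrix~\eqref{eqn-choi-matrix}, and show that the result agrees term-by-term with the dyadic decomposition~\eqref{eqn-choi-decomposition}. Substituting gives
\begin{equation*}
C_\mathcal{E}=\sum_{a,b=0}^{n-1}\ket{a}\bra{b}\otimes\mathcal{E}(\ket{a}\bra{b})=\sum_i \lambda_i\sum_{a,b=0}^{n-1}\ket{a}\bra{b}\otimes E_i\ket{a}\bra{b}E_i^\dag ,
\end{equation*}
so by linearity it suffices to prove, for each $i$, that the inner double sum equals $\ket{E_i}\bra{E_i}$ with $\ket{E_i}=\operatorname{vec}(E_i)$ as defined in~\eqref{eqn-vector-form}.

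The cleanest route I would take introduces the (unnormalized) maximally entangled vector $\ket{\Omega}=\sum_{a=0}^{n-1}\ket{a}\otimes\ket{a}$, for which $\ket{\Omega}\bra{\Omega}=\sum_{a,b}\ket{a}\bra{b}\otimes\ket{a}\bra{b}$ is exactly the unraveled structure appearing above. The key observation is that
\begin{equation*}
(I\otimes E_i)\ket{\Omega}=\sum_{a=0}^{n-1}\ket{a}\otimes E_i\ket{a}=\ket{E_i},
\end{equation*}
since the $a$-th block of the left-hand side is the $a$-th column of $E_i$, which is precisely the column-stacking prescription of~\eqref{eqn-vector-form}. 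The inner double sum then factors as $(I\otimes E_i)\ket{\Omega}\bra{\Omega}(I\otimes E_i^\dag)=\ket{E_i}\bra{E_i}$, and summing over $i$ against $\lambda_i$ reproduces~\eqref{eqn-choi-decomposition}.

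Alternatively, one can verify the identity by pure index bookkeeping: writing $E_i\ket{a}=\sum_k E_i(k,a)\ket{k}$ and $\bra{b}E_i^\dag=\sum_{k'}E_i(k',b)^*\bra{k'}$ shows that the inner sum places the entry $E_i(k,a)\,E_i(k',b)^*$ in the $\ket{a}\bra{b}\otimes\ket{k}\bra{k'}$ slot, which, after relabeling $a\to l$ and $b\to l'$, matches the entries of $\ket{E_i}\bra{E_i}$ exactly. I do not expect any genuine obstacle here; the only point demanding care is the consistency of conventions, namely which tensor factor carries the row versus the column index and the alignment of the column-stacking $\operatorname{vec}$ with the block layout of~\eqref{eqn-choi-matrix}. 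Once $(I\otimes E_i)\ket{\Omega}=\ket{E_i}$ is pinned down, the equivalence is immediate, which is why the statement describes it as checkable ``by direct expansion.''
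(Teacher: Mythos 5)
Your proof is correct and is exactly the ``direct expansion'' the paper alludes to but does not spell out: substituting the operator-sum form into the block definition of $C_\mathcal{E}$ and matching entries against $\sum_i\lambda_i\ket{E_i}\bra{E_i}$ under the column-stacking convention of~\eqref{eqn-vector-form}. Packaging the computation via $(I\otimes E_i)\ket{\Omega}=\ket{E_i}$ is a clean way to organize the bookkeeping, and you correctly identify the only delicate point, namely the alignment of the $\operatorname{vec}$ convention with the block layout of~\eqref{eqn-choi-matrix}.
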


The next Theorem follows directly from the definition of the Choi matrix in equation~\eqref{eqn-choi-matrix}.
\begin{theorem}
\label{thm-equiv-channels}
{\rm
$\mathcal{E}(\rho) = \mathcal{F}(\rho)$ for all density matrices $\rho$ if and only if $C_\mathcal{E}=C_\mathcal{F}$.
}
\end{theorem}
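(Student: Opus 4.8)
The plan is to prove both implications by exploiting the fact that the Choi matrix is, block by block, nothing but the image of the matrix units $\ket{i}\bra{j}$ under the map. The key structural observation, read directly off the definition in equation~\eqref{eqn-choi-matrix}, is that the $(i,j)$ block of $C_\mathcal{E}$ is exactly $\mathcal{E}(\ket{i}\bra{j})$, and likewise the $(i,j)$ block of $C_\mathcal{F}$ is $\mathcal{F}(\ket{i}\bra{j})$. Everything then reduces to bookkeeping between blocks of the Choi matrix and images of the matrix units.

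For the reverse direction ($C_\mathcal{E}=C_\mathcal{F} \implies \mathcal{E}=\mathcal{F}$) I would simply compare blocks: equality of the two $n^2\times n^2$ matrices forces $\mathcal{E}(\ket{i}\bra{j})=\mathcal{F}(\ket{i}\bra{j})$ for every pair $(i,j)$. Since the matrix units $\{\ket{i}\bra{j}\}_{i,j=0}^{n-1}$ form a basis of $\mathcal{B}(\mathcal{H}_n)$ and both maps are linear, agreement on this basis extends to agreement on every operator, in particular on every density matrix. This direction is purely definitional.

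The forward direction ($\mathcal{E}=\mathcal{F}$ on all density matrices $\implies C_\mathcal{E}=C_\mathcal{F}$) is where I expect the only genuine subtlety to lie, and I regard it as the main obstacle: the hypothesis supplies equality only on density matrices, yet the Choi matrices involve the off-diagonal blocks $\mathcal{E}(\ket{i}\bra{j})$ with $i\neq j$, and $\ket{i}\bra{j}$ is neither Hermitian nor a density matrix. To bridge this gap I would argue that the density matrices span all of $\mathcal{B}(\mathcal{H}_n)$ as a complex vector space. Concretely, the rank-one projectors onto $\ket{i}$, onto $(\ket{i}+\ket{j})/\sqrt{2}$, and onto $(\ket{i}+\mathrm{i}\ket{j})/\sqrt{2}$ are all density matrices, and suitable complex linear combinations of these (together with the diagonal projectors) recover each matrix unit $\ket{i}\bra{j}$. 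Hence every operator lies in the complex span of density matrices, so by linearity of $\mathcal{E}$ and $\mathcal{F}$ their agreement on density matrices forces $\mathcal{E}(\ket{i}\bra{j})=\mathcal{F}(\ket{i}\bra{j})$ for all $i,j$; reassembling these blocks yields $C_\mathcal{E}=C_\mathcal{F}$.

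In summary, both directions are translations between blocks of the Choi matrix and images of matrix units; the only step that is not immediate from the definition is the spanning argument, which upgrades \emph{equal on density matrices} to \emph{equal on all of} $\mathcal{B}(\mathcal{H}_n)$.
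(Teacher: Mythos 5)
Your proof is correct and is essentially the argument the paper has in mind: the paper offers no explicit proof, stating only that the theorem ``follows directly from the definition of the Choi matrix,'' and your block-by-block comparison is exactly that definitional reading. You rightly flag the one step that is not purely definitional --- upgrading equality on density matrices to equality on all of $\mathcal{B}(\mathcal{H}_n)$ via the spanning set of rank-one projectors onto $\ket{i}$, $(\ket{i}+\ket{j})/\sqrt{2}$, and $(\ket{i}+\mathrm{i}\ket{j})/\sqrt{2}$ --- and your linear-combination identity recovering each matrix unit $\ket{i}\bra{j}$ is the standard and correct way to close that gap.
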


\begin{definition}
\label{def-channel-sum}
{\rm
We define the {\em channel sum} for linear map $\mathcal{E}$ with a Hermitian Choi matrix as
\begin{equation}
\label{eqn-def-channel-sum}
T_\mathcal{E} \triangleq \sum_i \lambda_i E_i^\dag E_i,
\end{equation}
which has dimension $n \times n$.
}
\end{definition}%
The next lemma shows how to obtain the channel sum from channel outputs directly.
\begin{lemma}
\label{lemma-channel-sum-of-Choi-equiv}
{\rm
\begin{eqnarray*}
T_\mathcal{E} &=& \sum_{i,j=0}^{n-1} \ket{i}\bra{j} \cdot \tr[\mathcal{E}(\ket{j}\bra{i})]
\\
&=&\tr_2^t ( C_\mathcal{E} )
\end{eqnarray*}
where $\tr_2$ is the partial trace over the second system and $t$ represents transpose.
Note that here, it does not matter whether the transpose is taken after or before the partial trace.
}
\end{lemma}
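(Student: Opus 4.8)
The plan is to prove the two asserted equalities in turn, both by direct computation from the operator-sum representation of $\mathcal{E}$ and the definition of the Choi matrix; there is no conceptual obstacle here, only index bookkeeping and the correct handling of the transpose. I would treat the decomposition-based expression $T_\mathcal{E}=\sum_i\lambda_i E_i^\dag E_i$ as the starting definition (Definition~\ref{def-channel-sum}) and work outward to the two closed-form expressions on the right.

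First I would establish the equality $T_\mathcal{E}=\sum_{i,j}\ket{i}\bra{j}\,\tr[\mathcal{E}(\ket{j}\bra{i})]$. Writing $\mathcal{E}$ in the operator-sum form of equation~\eqref{eqn-channel-operator-sum-form}, I compute
\begin{equation*}
\tr[\mathcal{E}(\ket{j}\bra{i})]=\sum_k \lambda_k\,\tr[E_k\ket{j}\bra{i}E_k^\dag]=\sum_k \lambda_k\,\bra{i}E_k^\dag E_k\ket{j},
\end{equation*}
using linearity and the cyclic property of the trace. The final factor is exactly the $(i,j)$ matrix element of $\sum_k\lambda_k E_k^\dag E_k$. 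Summing against $\ket{i}\bra{j}$ and invoking the elementary identity $\sum_{i,j}(\bra{i}M\ket{j})\ket{i}\bra{j}=M$ for any $n\times n$ matrix $M$ then reassembles $\sum_k\lambda_k E_k^\dag E_k=T_\mathcal{E}$.

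Next I would verify $\sum_{i,j}\ket{i}\bra{j}\,\tr[\mathcal{E}(\ket{j}\bra{i})]=\tr_2^t(C_\mathcal{E})$. Taking the partial trace of $C_\mathcal{E}=\sum_{i,j}\ket{i}\bra{j}\otimes\mathcal{E}(\ket{i}\bra{j})$ over the second system gives $\tr_2(C_\mathcal{E})=\sum_{i,j}\ket{i}\bra{j}\,\tr[\mathcal{E}(\ket{i}\bra{j})]$; transposing and relabeling $i\leftrightarrow j$ yields the claimed sum. To justify the remark that the transpose may be taken before or after the partial trace, I would note that for a product operator $A\otimes B$ one has $\tr_2((A\otimes B)^t)=A^t\,\tr(B^t)=(A\,\tr B)^t=(\tr_2(A\otimes B))^t$, and the general case follows by linearity of both operations.

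The only point requiring genuine care is the transpose convention: its effect on the reassembled matrix (so that $E_k^\dag E_k$, not $E_k E_k^\dag$, appears) and its commutation with $\tr_2$. Since the operator-sum decomposition is non-unique, I would also observe that $T_\mathcal{E}$ is in fact independent of the chosen decomposition, as it coincides with the decomposition-free expression $\tr_2^t(C_\mathcal{E})$; this is an immediate consequence of the chain of equalities just established, and it confirms that Definition~\ref{def-channel-sum} is well posed.
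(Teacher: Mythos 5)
Your proposal is correct and follows essentially the same route as the paper: the paper's proof likewise computes the $(i,j)$ matrix element $\bra{i}T_\mathcal{E}\ket{j}=\tr[\sum_k\lambda_k E_k\ket{j}\bra{i}E_k^\dag]=\tr[\mathcal{E}(\ket{j}\bra{i})]$ via cyclicity of the trace, and treats the identification with $\tr_2^t(C_\mathcal{E})$ as immediate from the definition of the Choi matrix. Your additional explicit verification of the second equality and of the commutation of transpose with partial trace is sound but goes beyond what the paper records.
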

\begin{proof}
The $(i,j)$ element of $S_\mathcal{E}$ defined in equation~\eqref{eqn-def-channel-sum} is
\begin{eqnarray*}
\bra{i}T_\mathcal{E}\ket{j} &=& \bra{i} \sum_k \lambda_k E_k^\dag E_k \ket{j}\\
&=&\tr \left[ \sum_k \lambda_k E_k \ket{j} \bra{i} E_k^\dag \right]\\
&=&\tr \left[ \mathcal{E}(\ket{j}\bra{i}) \right].
\end{eqnarray*}
\end{proof}
Therefore, $T_\mathcal{E}$ is 
independent of the operator-sum form of $\mathcal{E}$ and is dependent only on the Choi matrix.
This allows us to define the channel sum function 
\begin{eqnarray}
\label{eqn-channel-sum-def-appendix}
\ChSum(C_\mathcal{E})&:=&\tr_2^t ( C_\mathcal{E} )
\\
&=&
T_\mathcal{E} \nonumber
\end{eqnarray}
where we used Lemma~\ref{lemma-channel-sum-of-Choi-equiv}.
The introduction of $\ChSum$
facilitates 
the
discussion of the channel sum with reference to only the Choi matrix.
\begin{lemma}
\label{lemma-TP}
{\rm
A linear map $\mathcal{E}$ is trace-preserving if and only if 
$\ChSum(C_\mathcal{E})=I$.
}
\end{lemma}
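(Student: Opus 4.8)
The plan is to reduce everything to the matrix-element formula supplied by Lemma~\ref{lemma-channel-sum-of-Choi-equiv}, which already identifies $\ChSum(C_\mathcal{E})$ with $T_\mathcal{E}$ and expresses its entries directly in terms of channel outputs. Concretely, that lemma yields
\begin{equation*}
\bra{i}\ChSum(C_\mathcal{E})\ket{j}=\tr[\mathcal{E}(\ket{j}\bra{i})]
\end{equation*}
for all $i,j\in\{0,\ldots,n-1\}$, so the assertion $\ChSum(C_\mathcal{E})=I$ is equivalent to the $n^2$ scalar conditions $\tr[\mathcal{E}(\ket{j}\bra{i})]=\delta_{ij}=\tr(\ket{j}\bra{i})$. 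The whole lemma then amounts to showing that this family of conditions on the rank-one basis operators is equivalent to the trace-preserving condition on all of $\mathcal{B}(\mathcal{H}_n)$.

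For the forward implication I would assume $\mathcal{E}$ is trace-preserving and apply the defining identity $\tr(\mathcal{E}(A))=\tr(A)$ to the basis operators $A=\ket{j}\bra{i}$. This immediately gives $\tr[\mathcal{E}(\ket{j}\bra{i})]=\tr(\ket{j}\bra{i})=\braket{i}{j}=\delta_{ij}$, and by the matrix-element formula above every entry of $\ChSum(C_\mathcal{E})$ equals $\delta_{ij}$, i.e.\ $\ChSum(C_\mathcal{E})=I$.

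For the converse I would assume $\ChSum(C_\mathcal{E})=I$, which by the same formula gives $\tr[\mathcal{E}(\ket{j}\bra{i})]=\delta_{ij}$ for every $i,j$. The only remaining point is to lift this from the basis $\{\ket{j}\bra{i}\}$ to an arbitrary operator: I would write a general $A\in\mathcal{B}(\mathcal{H}_n)$ as $A=\sum_{i,j}\bra{j}A\ket{i}\,\ket{j}\bra{i}$ and invoke linearity of both $\mathcal{E}$ and the trace,
\begin{equation*}
\tr(\mathcal{E}(A))=\sum_{i,j}\bra{j}A\ket{i}\,\tr[\mathcal{E}(\ket{j}\bra{i})]=\sum_{i,j}\bra{j}A\ket{i}\,\delta_{ij}=\sum_i\bra{i}A\ket{i}=\tr(A),
\end{equation*}
so $\mathcal{E}$ is trace-preserving.

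There is no genuine obstacle here once Lemma~\ref{lemma-channel-sum-of-Choi-equiv} is in hand; the one subtlety to respect is that trace preservation is required to hold for \emph{all} operators, not merely density matrices, so the converse direction really does need the basis-expansion step rather than testing on states alone. Equivalently, one could route the argument through the operator-sum form, where $\ChSum(C_\mathcal{E})=\sum_i\lambda_i E_i^\dag E_i$ and trace preservation is the familiar condition $\sum_i\lambda_i E_i^\dag E_i=I$, but the Choi-entry route above avoids committing to any particular operator-sum representation.
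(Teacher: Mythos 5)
Your proof is correct and follows essentially the same route as the paper: both arguments reduce the statement to $T_\mathcal{E}=I$ via the identification $\ChSum(C_\mathcal{E})=T_\mathcal{E}$, the paper using $\tr(\mathcal{E}(\rho))=\tr(T_\mathcal{E}\rho)$ for all $\rho$ while you use the equivalent entry-wise formula $\bra{i}T_\mathcal{E}\ket{j}=\tr[\mathcal{E}(\ket{j}\bra{i})]$ from Lemma~\ref{lemma-channel-sum-of-Choi-equiv}. Your converse direction, which the paper dismisses as obvious, is spelled out correctly via the basis expansion and linearity.
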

\begin{proof}
For $\mathcal{E}$ to be trace-preserving, the following must hold for all density matrices $\rho$:
\begin{eqnarray*}
\tr(\rho)
&=&
\tr[ \mathcal{E}(\rho)]
\\
&=&
\tr( T_\mathcal{E} \rho)
\end{eqnarray*}
where the last equality is due to equation~\eqref{eqn-def-channel-sum}.
Since this holds for all $\rho$, $T_\mathcal{E}=I$.
Then, using equation~\eqref{eqn-channel-sum-def-appendix},
$\ChSum(C_\mathcal{E})=I$.

The proof for the other direction is obvious.
\end{proof}

We remark that the eigenvalues of a Choi matrix $C$ and of its channel sum $\ChSum(C)$ are in general not the same.
Nevertheless, they have the same trace.
\begin{corollary}
\label{cor-same-trace}
{\rm
$\tr (C) = \tr (\ChSum(C))$ for 
{\textcolor{mycolor2}{
any 
}}%
Choi matrix $C$.
}
\end{corollary}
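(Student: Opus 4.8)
The plan is to exploit the two equivalent descriptions of $\ChSum$ established in Lemma~\ref{lemma-channel-sum-of-Choi-equiv}, namely $\ChSum(C)=\tr_2^t(C)=T_\mathcal{E}$, and to pick whichever one makes the trace identity transparent.

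First I would use the partial-trace characterization $\ChSum(C)=\tr_2^t(C)$. Since the transpose leaves the trace of any square matrix unchanged, $\tr(\ChSum(C))=\tr(\tr_2^t(C))=\tr(\tr_2(C))$. It then remains to observe that tracing out the second subsystem and subsequently taking the trace of the surviving first subsystem is exactly the full trace over the composite space, i.e. $\tr(\tr_2(C))=\tr(C)$. This iterated-trace identity is standard, and combined with the transpose invariance it settles the claim in one line.

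As an independent check I would also verify the statement through the operator-sum picture. Writing $C=\sum_i \lambda_i \ket{E_i}\bra{E_i}$ as in equation~\eqref{eqn-choi-decomposition}, linearity of the trace gives $\tr(C)=\sum_i \lambda_i \braket{E_i}{E_i}$. The key observation here is that the vectorization $\ket{E_i}=\operatorname{vec}(E_i)$ of equation~\eqref{eqn-vector-form} is an isometry from the Frobenius inner product to the standard one, so $\braket{E_i}{E_i}=\tr(E_i^\dag E_i)$. Since $\ChSum(C)=T_\mathcal{E}=\sum_i \lambda_i E_i^\dag E_i$ by Definition~\ref{def-channel-sum}, its trace is $\sum_i \lambda_i \tr(E_i^\dag E_i)$, which matches $\tr(C)$ term by term.

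Because the result is an elementary consequence of these identities, I do not anticipate a genuine obstacle; the only point requiring a little care is justifying the two supporting facts cleanly — that the transpose preserves the trace and that the iterated partial trace equals the full trace (or, in the alternative route, that $\operatorname{vec}$ preserves inner products). Either route closes the corollary immediately, and I would present the partial-trace argument as the main proof since it avoids committing to a particular operator-sum decomposition.
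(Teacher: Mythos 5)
Your proof is correct, and it uses exactly the machinery the paper intends: the paper states this corollary without proof, leaving it as an immediate consequence of Lemma~\ref{lemma-channel-sum-of-Choi-equiv} and the definition $\ChSum(C)=\tr_2^t(C)$, which is precisely your main (partial-trace) argument. Your second route via $\operatorname{vec}$ and the operator-sum form is a valid cross-check but adds nothing essential beyond what the first line already settles.
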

Thus, the trace of the Choi matrix of a quantum channel is its Hilbert space dimension $n$ because the channel sum of a quantum channel is $I_n$.
This corollary will be useful when we consider the lower bound of the channel distance.
\begin{corollary}
\label{cor-S-linear}
{\rm
$C_\mathcal{E}+C_\mathcal{F}=C_{\mathcal{E}+\mathcal{F}}$ and
$\ChSum(C_{\mathcal{E}+\mathcal{F}})=\ChSum(C_\mathcal{E})+\ChSum(C_\mathcal{F})$ 
for linear maps $\mathcal{E}$ and $\mathcal{F}$.
}
\end{corollary}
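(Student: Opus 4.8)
The plan is to prove both identities by pure linearity, since the Choi map $\mathcal{E} \mapsto C_\mathcal{E}$ and the channel sum $\ChSum$ are manifestly linear in their respective arguments. I expect no essential difficulty; the statement is a formal consequence of how the two constructions are defined.

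First I would establish $C_{\mathcal{E}+\mathcal{F}} = C_\mathcal{E} + C_\mathcal{F}$ directly from the definition in equation~\eqref{eqn-choi-matrix}. Taking $(\mathcal{E}+\mathcal{F})(\rho) = \mathcal{E}(\rho) + \mathcal{F}(\rho)$ as the pointwise sum of the two maps and substituting $\rho = \ket{i}\bra{j}$, I would expand
\begin{equation*}
C_{\mathcal{E}+\mathcal{F}} = \sum_{i,j=0}^{n-1} \ket{i}\bra{j} \otimes \bigl[ \mathcal{E}(\ket{i}\bra{j}) + \mathcal{F}(\ket{i}\bra{j}) \bigr].
\end{equation*}
Distributing the tensor product over the sum in the second factor and splitting the single sum into two sums reproduces $C_\mathcal{E}$ and $C_\mathcal{F}$ separately, which gives the first claim.

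For the second identity I would use that, by equation~\eqref{eqn-channel-sum-def}, the channel sum function is the composition $\ChSum = \tr_2^t$ of a partial trace and a transpose. Both of these are linear operations on matrices, so $\ChSum$ is linear in its Choi-matrix argument. Hence
\begin{equation*}
\ChSum(C_{\mathcal{E}+\mathcal{F}}) = \ChSum(C_\mathcal{E} + C_\mathcal{F}) = \ChSum(C_\mathcal{E}) + \ChSum(C_\mathcal{F}),
\end{equation*}
where the first equality is the identity just proved and the second is linearity of $\ChSum$. Alternatively, the same conclusion follows termwise from the explicit formula $\ChSum(C_\mathcal{E}) = \sum_{i,j} \ket{i}\bra{j} \cdot \tr[\mathcal{E}(\ket{j}\bra{i})]$ together with additivity of the trace.

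The only point that requires a moment of care—rather than being a genuine obstacle—is fixing the meaning of $\mathcal{E}+\mathcal{F}$ as the pointwise sum of linear maps; once that is settled, both identities are forced and the proof is a two-line verification.
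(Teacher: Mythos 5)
Your proof is correct and follows exactly the route the paper intends: the corollary is stated without proof precisely because, as you observe, both identities are immediate from the definition of the Choi matrix in equation~\eqref{eqn-choi-matrix} and the fact that $\ChSum = \tr_2^t$ is a composition of linear operations. Nothing further is needed.
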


Also, note that it can easily be checked that if $C$ is PSD, $\ChSum(C)$ is also PSD.

\section{Channel specification for section~\ref{sec-example-random-qubit-channels}}
\label{app-example-random-qubit-channels}

The two channels are 
$\mathcal{E}(\rho)=\sum_{i=1}^4 E_i \rho E_i^\dag$ 
and
$\mathcal{F}(\rho)=\sum_{i=1}^4 F_i \rho F_i^\dag$,
where
\begin{equation*}
\begin{array}{rl}
E_1 &=
\left[
\begin{array}{cc}
 -0.504828 & -0.331944 \\
 -0.0133105 & 0.295026
\end{array}
\right],
\smallskip
\\
E_2 &=
\left[
\begin{array}{cc}
 0.419485 & 0.158018 \\
 0.330761 & 0.0616354
\end{array}
\right],
\smallskip
\\
E_3 &=
\left[
\begin{array}{cc}
 0.464696 & 0.251826 \\
 -0.312786 & 0.165248
\end{array}
\right],
\smallskip
\\
E_4 &=
\left[
\begin{array}{cc}
 0.160149 & -0.346665 \\
 -0.346665 & 0.750403
\end{array}
\right],
\smallskip
\\
F_1 &=
\left[
\begin{array}{cc}
 -0.20917 & -0.248828 \\
 0.382771 & -0.451866
\end{array}
\right],
\smallskip
\\
F_2 &=
\left[
\begin{array}{cc}
 -0.62412 & -0.425856 \\
 0.286902 & -0.0613943
\end{array}
\right],
\smallskip
\\
F_3 &=
\left[
\begin{array}{cc}
 0.216184 & -0.422341 \\
 -0.403389 & 0.451605
\end{array}
\right],
\smallskip
\\
F_4 &=
\left[
\begin{array}{cc}
 0.236514 & 0.269256 \\
 0.269256 & 0.306531
\end{array}
\right]
.
\end{array}
\end{equation*}

\section*{References}

\bibliographystyle{iopart-num}

\bibliography{paperdb}

\end{document}